\renewcommand{\cite}{\citet}
\theoremstyle{plain} { \theorembodyfont{\rmfamily}
\newtheorem{definition}{Definition}
\newtheorem{remark}{Remark}
\newtheorem{example}{Example}
}
\newtheorem{proposition}{Proposition}
\newtheorem{lemma}{Lemma}
\newtheorem{corollary}{Corollary}
\newtheorem{theorem}{Theorem}
\newcommand{\qed}{\hfill \mbox{\raggedright \rule{.07in}{.1in}}}
\newenvironment{proof}{\vspace{1ex}\noindent{\bf Proof}\hspace{0.5em}}
    {\hfill\qed\vspace{1ex}}
\newcommand{\cF}{{\cal F}}
\newcommand{\cM}{{\cal M}}
\newcommand{\cP}{{\cal P}}
\newcommand{\cL}{{\cal L}}
\newcommand{\prob}{\mathbb{P}}
\newcommand{\expect}{\mathbb{E}}
\newcommand{\LP}{\mathbb{L}}
\newcommand{\quan}{{\cal Q}}
\newcommand{\cdf}{{\cal D}}
\newcommand{\dom}{\text{dom}}
\newcommand{\density}{\Psi}
\newcommand{\unitdensity}{\Phi}
\newcommand{\ba}{\begin{eqnarray}}
\newcommand{\ea}{\end{eqnarray}}
\newcommand{\baa}{\begin{eqnarray*}}
\newcommand{\eaa}{\end{eqnarray*}}
\newcommand{\bR}{\mathbb{R}}
\newcommand{\bP}{\mathbb{P}}
\newcommand{\cC}{\mathcal{C}}
\newcommand{\cD}{\mathcal{D}}
\newcommand{\rhoc}{\rho_\delta}
\newcommand{\re}{\widehat{\rho}}
\newcommand{\VaR}{\textrm{VaR}}
\newcommand{\SVaR}{\textrm{SVaR}}
\title{Loss-based risk measures}
\author{Rama CONT$^{1}$, Romain DEGUEST $^{2}$ and Xue Dong HE$^3$\\
\ \\
1)\  Laboratoire de Probabilit\'es et Mod\`eles Al\'eatoires\\
CNRS- Universit\'e Pierre et Marie Curie, France.\\
2)\ EDHEC Business School, Nice (France).\\
3)\ IEOR Dept, Columbia University, New York.}
\date{2011. Revised: 2012. \\
To appear in: {\sc Statistics \& Risk Modeling}, 2013.} 
\begin{document}
\maketitle
\begin{abstract}
Starting from the requirement that risk of financial portfolios
should be measured in terms  of their losses, not their gains, we
define the notion of {\it loss-based risk measure} and study the
properties of this class of risk measures. We characterize convex
loss-based risk measures by a representation theorem and give
examples of such risk measures. We then discuss the statistical
robustness of the risk estimators associated with the family of
loss-based risk measures: we provide a general criterion for the
qualitative robustness of the risk estimators and
 compare this criterion with a sensitivity analysis of estimators based on
influence functions. We find that the risk estimators associated with convex loss-based risk measures are not robust.
\end{abstract}

\newpage

\section{Introduction}
\subsection{Motivation}
A main focus  of quantitative modeling in finance has been to
measure the risk of financial portfolios. In connection with the
widespread use of Value-at-Risk (VaR) and related risk measurement
methodologies,  a considerable theoretical literature
\citep{acerbi2002smr,acerbi2007cmr,artzner1999cmr,contdeguestscandolo,FollmerHSchiedA:02convexriskmeasure,follmer2011sfi,frittelli2002por,heyde2006grm,mcneil2005qrm}
has focused on the design of appropriate risk measures for financial
portfolios. In this   approach,   a risk measure is represented as a
real-valued map   assigning to each  random variable
$X$---representing the payoff of a portfolio---a number which
measures the risk of this portfolio. A framework often used as a
starting point is the axiomatic setting of { \cite{artzner1999cmr},
which defines a {\it coherent risk measure} as a map
$\rho:L^{\infty}(\Omega,{\cal F},\mathbb{P})\to\mathbb{R}$ that is
\begin{enumerate}
    \item \textit{monotone} (\emph{decreasing}): $\rho(X)\leq
\rho(Y)$ provided $X\geq Y$;
    \item \textit{cash-additive (additive with respect to cash reserves)}:
$\rho(X+c)= \rho(X)-c$ for any $c\in\mathbb{R}$;
    \item \textit{positive homogeneous}: $\rho(\lambda X)=\lambda\rho(X)$ for any $\lambda\geq 0$;
    \item \textit{sub-additive}: $\rho(X+Y)\leq \rho(X)+\rho(Y)$.
\end{enumerate}
\cite{artzner1999cmr} argue that these axioms correspond to
desirable properties of a risk measure, such as the reduction of
risk under diversification. These axioms have provided   an elegant
mathematical framework for the study of coherent risk measures, but
fail to take into account some key features encountered in the
practice of risk management, as illustrated by the following
(important) example.

Consider a central clearing facility or an exchange, in which
various market participants  clear portfolios of financial
instruments. Any   participant of the
 clearing house must deposit a margin
requirement   for the purpose of covering the potential cost of
liquidating the   clearing participant's portfolio in case of
default. The risk measurement problem facing the exchange is
therefore to determine the margin requirement for each portfolio,
which is in this case the risk measure of the portfolio as seen by
the exchange. Unlike the situation of an investor evaluating his/her
own risk, the exchange is affected by the gains and losses  of the
market participants in an asymmetric way. As long as the market
participant's positions results in a gain, the gain is kept by the
participant, but if the participant suffers a loss, the exchange may
have to step in and cover the loss in case the participant default.
It follows that, when measuring the risk posed {\it to the exchange}
by a participant's portfolio, it is only relevant to consider the
{\it losses} of this portfolio, not the gains. Indeed, the well
known Standard Portfolio ANalysis (SPAN) method introduced by the
Chicago Merchantile Exchange and used by many other exchanges,
computes the margin requirement of a financial portfolio with profit
and loss (P\&L) $X$   as the maximum {\em loss} of the portfolio
over a set of pre-selected stress scenarios
$\omega_1,\dots,\omega_n$:
\begin{align}\label{eq:SPAN}
\rho(X)= \max \{ - \min(X(\omega_1),0),..., - \min(X(\omega_n),0)\}.
\end{align}
As we can see in this example, the risk measure of a portfolio is
only based on the loss $\min(X,0)$ that is, the negative part of
$X$.

The argument that a risk measure should be based on losses, not
gains, is not restricted to the problem of computing margin
requirements for a central clearing facility. Indeed, a regulator
faces a similar issue when evaluating the cost of a bank failure:
these costs materialize  only in scenarios when a bank undergoes
large losses resulting in its default, whereas the trading gains of
a bank do not positively affect the regulator's position. Thus, the
risk of a bank's portfolio, as viewed by the regulator, should also
be based on the magnitude of the bank's {\it loss}, not its
potential gains.

These examples show that, a risk measure    used for determining
capital (or margin) requirements (called 'external' risk measure in
\cite{heyde2006grm})  should   be solely based on the loss of a
portfolio. This property can be formulated by requiring the risk
measure $\rho(X)$  to depend only on the negative part of $X$,
representing the loss:
\begin{equation} \rho(X) = \rho(\  \min(X,0)\ ). \label{eq.loss}\end{equation}


This property is clearly not contained in the axioms of coherent
risk measures. In fact, the cash-additivity property   implies that
coherent risk measures must depend on gains as well, which clearly
contradicts \eqref{eq.loss}.  So, one may not simply add the
loss-dependence property \eqref{eq.loss} to the axioms of coherent
risk measures without reconsidering the other axioms. In fact, the
CME SPAN method does {\it not} verify the cash-additivity axiom and
therefore is not a coherent risk measure.\footnote{It is interesting
to note that the CME SPAN method was one of the motivations cited in
\cite{artzner1999cmr} for introducing the framework of 'coherent'
risk measures.}

Many revisions to the axioms of coherent risk measures have been
proposed and studied in the literature, replacing in particular
positive homogeneity and sub-additivity   with the more general
convexity property
\citep{FollmerHSchiedA:02convexriskmeasure,follmer2011sfi,frittelli2002por},
co-monotonic sub-additivity  \citep{heyde2006grm} or  co-monotonic
convexity \citep{SongYan:2009riskmeasure}. But these alternative
frameworks still rely on cash-additivity and do not consider the
property of loss-dependence as formulated in \eqref{eq.loss}, so a
proper definition of loss-based risk measure calls for a systematic
revision of the axioms of \cite{artzner1999cmr} along new
directions.

 Let us note here
that the property of loss-dependence is not exactly the same as
requiring the risk measure to depend on, say, the left tail of the
loss distribution (which is the case for Value at Risk or Expected
Shortfall, for instance). The example of a portfolio with random,
but positive payoffs shows the difference between these two.

We propose a new class of risk measures,   {\em loss-based risk
measures}, which depend only on the loss  of a  financial portfolio,
and investigate the properties of such risk measures.

Since  the cash-additivity property is incompatible with the
loss-based property, we remove the cash-additivity for risk
measures. However, it is worth mentioning that  loss-based
risk measures, though not   cash-additive,
do not necessarily violate the property 
$\rho(X+\rho(X))=0$. Indeed, it is easy to verify that the
  CME SPAN method satisfies this
property,  without verifying the cash-additivity property.
\cite{ElKarouiRavanelli:2009CashSubadditive} challenge the axiom of
cash additivity from a different angle, showing that when
considering risk measures defined on {\em future} (instead of
discounted) value of portfolio gains in order to take into account
interest rate risk,  cash additivity is inevitably violated and
needs to be replaced by cash subadditivity. Our approach provides an
independent motivation for relaxing cash-additivity as a property
for risk measures.

Another issue  which is extremely important in practice but somewhat
neglected in the literature on risk measures is the issue of
statistical estimation   and the design of robust estimators for
risk measures. Risk measures are usually defined in terms of a portfolio's
profit/loss (P\&L) or their distributions. Because these distributions are
not directly observed, one has to use historical data to {\it estimate } the risk of
each portfolio. For instance, one may use the historical P\&Ls of a
portfolio to estimate the distribution of the portfolio P\&L and
apply a risk measure on the estimated distribution to obtain an
estimate of the portfolio risk. Such a procedure, which starts
from historical data or simulated  losses as input and obtains the estimated
risk of the portfolio as output, is called a risk estimator. Roughly
speaking, a risk estimator is robust if the resulting risk estimate
of a portfolio is not extremely sensitive to a small change in the
sample. A non-robust risk estimator may vary dramatically from day
to day which makes it difficult to use and even more difficult,  if not impossible, to backtest. For
instance, if the clearing house computes margin requirements
according to a non-robust risk estimator,  market participants may
find the resulting margin requirements to be extremely volatile and
thus unacceptable. Unfortunately, as shown in
\cite{contdeguestscandolo}, an unintended consequence of
subadditivity property    is that it
 requires dependence on extreme tail events,
which leads in turn to high sensitivity to outliers and lack of
robustness. We study these issues in the context of loss-based risk
measures, extending  previous work of
\cite{contdeguestscandolo}, and characterize loss-based
risk measures which admit robust estimators.

Loss-based risk measures have in fact a long tradition in actuarial science: \cite{Hattendorff1868} is an early example. In the financial risk management context, this idea was explored by
\cite{jarrow02} and, in parallel with the present work, by \cite{Staum:2011ExcessInvarianceShortfall} in   a discrete setting.
\cite{jarrow02} defines a risk measure that is the premium of the
put option on a portfolio's net value. 
 Our setting extends these examples, discussed in Section 2, and allows to
obtain a characterization of loss-based risk measures on a
general probability space and study the robustness of the associated
risk estimators.

\subsection{Main results}
We consider in this paper an alternative approach to defining risk
measures which addresses these concerns.
 Starting from the requirement that risk measures of financial
portfolios should be based on their losses, not their gains, we
define  the notion of {\it loss-based risk measure} and study the
properties of this class of risk measures.

We first provide a dual representation for convex loss-based risk
measures, which are loss-based risk measures satisfying a convexity
property. This representation is similar to that of convex risk
measures, and  states that a convex loss-based risk measure is
worst-case expected loss (adjusted by some penalty). For {\it
statistical} convex loss-based risk measures  i.e. which only depend
on the loss distribution, we provide another representation theorem
in terms of portfolio loss quantiles.

We provide abundant examples of convex loss-based risk measures,
many of which are obtained by simply replacing P\&Ls with their loss
parts in certain convex risk measures. However, we also provide an
example that cannot be constructed from convex risk measures in this
way. This example illustrates that convex loss-based risk measures
are not trivial extensions of convex risk measures. We further prove
that a convex loss-based risk measure can be constructed from a
convex risk measure by replacing P\&Ls with their loss parts if and
only if it satisfies a property which we call cash-loss additivity.

We then investigate the robustness of the risk estimators associated
with a family of statistical loss-based risk measures that include
both statistical convex loss-based risk measures and VaR on losses
as special cases. Using a notion of robustness for risk estimators
given in \cite{contdeguestscandolo}, we give  a necessary and
sufficient  condition for the risk estimators to be robust. Our
results imply that risk estimators associated with {\it convex } statistical
loss-based risk measures are {\em not} robust, whereas sample
loss quantiles are robust. These conclusions are further confirmed
by investigating the influence function of a large class of
statistical loss-based risk measures.

One of our main results is that   the convexity property, which
leads to reduction of risk under diversification, cannot coexist
with robustness. Therefore, when choosing a risk measure, one has to
decide which property is more important, and the choice should be
dependent of the context in which the risk measure is used. For
example, if the risk measure is used to compute margin requirements
frequently, which is the case in a clearing house, robustness might
be more important than convexity. If the risk is used for asset
allocation, then robustness may not  be the primary  issue  st stake
and convexity might be more relevant property in this case.

The  paper is organized as follows. In Section \ref{sec:lossbased}
we define loss-based risk measures and provide the representation
theorems for convex loss-based risk measures before we provide
several examples of loss-based risk measures. Section
\ref{sec:robustness} is devoted to studying the qualitative
robustness of the risk estimators associated with a family of
statistical loss-based risk measures. In Section
\ref{sec:sensitivity} we perform sensitivity analysis on a set of
loss-based risk measures and investigate their influence functions.
Finally, Section \ref{se:Conclusions} concludes the paper.

\section{Loss-Based Risk Measures}\label{sec:lossbased}


Consider an atomless probability space $(\Omega, \cF,\prob)$
representing market scenarios. For a random variable $X$, denote by
$F_X(\cdot)$ its cumulative distribution function and $G_X(\cdot)$
its left-continuous quantile function. For any $p\in[1,\infty)$, let
$\LP^p(\Omega, \cF,\prob)$ be the space of random variables $X$ with
the norm $\|X\|_p:=\left(\expect[|X|^p]\right)^\frac{1}{p}$, and let
$\LP^\infty(\Omega, \cF,\prob)$ be the space of  bounded random
variables.
Let
\begin{align*}
  \qquad\cP(\Omega,\cF,\bP):=\{X\in \LP^1(\Omega, \cF,\prob):X\ge 0,||X||_1=1 \},
\end{align*}
\begin{align*}
{\rm and}\qquad  \cM(\Omega,\cF,\bP):=\{X\in \LP^1(\Omega, \cF,\prob):X\ge 0,||X||_1\le1 \}.
\end{align*}
Then, $\cP(\Omega,\cF,\bP)$ can also be regarded as the set of $\bP$-absolutely continuous probability measures on $(\Omega,\cF,\bP)$, and $\cM(\Omega,\cF,\bP)$ as the set of $\bP$-absolutely continuous measures $\mu$ such that $\mu(\Omega)\leq1$.

Denote by $\cP((0,1))$ the set of probability measures on the open
unit interval $(0,1)$, and $\cM((0,1))$ the set of  positive
measures $\mu$ on $(0,1)$ such that $\mu((0,1))\leq1$.  Let
\begin{align}
  \density((0,1)) :&=\left\{ \phi:(0,1)\mapsto\mathbb{R}_+ \mid \phi(\cdot)\text{ is decreasing on (0,1) and }\int_0^1\phi(z)dz\le 1 \right\},\\
  \unitdensity((0,1)):&=\left\{ \phi:(0,1)\mapsto\mathbb{R}_+ \mid \phi(\cdot)\text{ is decreasing on (0,1) and }\int_0^1\phi(z)dz= 1 \right\},
\end{align}
both of which can be identified as subsets of $\cM((0,1))$. Finally, for any random variables $X$, let $X\wedge 0:=\min(X,0)$.

\subsection{Definition}

A risk measure is a mapping which associates to a random variable
$X$, representing the future P\&L of a portfolio, a number $\rho(X)$
representing its risk. The set of random variables $X$ is often
taken to be $\LP^\infty(\Omega, \cF,\prob)$
\cite{artzner1999cmr,FollmerHSchiedA:02convexriskmeasure}, but one
may also allow for unbounded P\&Ls by defining  risk measures as
maps on $\LP^p(\Omega, \cF,\prob)$; see for instance,
\cite{FilipovicSvindland2008:CanonicalModelSpace} and the references
therein. In this section we will focus on risk measures defined on
$\LP^\infty(\Omega, \cF,\prob)$, following \cite{artzner1999cmr}. In
the study of robustness in Sections \ref{sec:robustness} and
\ref{sec:sensitivity}, we will revert back to the more general case
of unbounded losses.

\begin{definition}[Loss-based risk measures]\label{de:riskmeasure}
A mapping $\rho:\LP^\infty(\Omega, \cF,\prob)\rightarrow \mathbb{R}_{+}$ is called a {\em loss-based risk measure} if it satisfies
  \begin{itemize}
    \item[\bf (a)] {\em Normalization for cash losses:} for any $\alpha \in\mathbb{R}_{+}$, $\rho(-\alpha) = \alpha$;
    \item[\bf (b)] {\em Monotonicity:} for any $X,Y \in \LP^\infty(\Omega, \cF,\prob)$, if $ X \leq Y$, then $\rho(X) \geq
     \rho(Y)$; and
    \item[\bf (c)] {\em Loss-dependence:} for any $X\in \LP^\infty(\Omega, \cF,\prob)$, $ \rho(X) = \rho(X\wedge 0)$.
\end{itemize}
\end{definition}

The cash-loss property is a ``normalization" property stating that the risk of a (non-random) cash liability is its face value. This property ensures that the risk of a portfolio has the same unit as monetary payoffs. It is easy to observe that the cash-loss property is implied by the cash-additivity property in the set of axioms of coherent risk measures. However, the cash-additivity cannot be inferred from the cash-loss property. The monotonicity property says that a higher payoff leads to lower risk, which is a natural requirement for a meaningful risk measure and is also enforced in the definition of coherent risk measures. The loss-dependence property entails that the risk of a portfolio only depends on its losses, which is a desirable property when the risk measure is used to compute margin requirements or capital reserves. Loss-based risk measures can be seen as a special case of the more general notion of risk orders considered in \cite{DrapeauKupper2010:RiskPreferences}.

A loss-based risk measure $\rho$ is called {\em convex loss-based risk measure} if it satisfies
\begin{itemize}   \item[\bf (d)] {\em convexity:} for any $X,Y \in \LP^\infty(\Omega, \cF,\prob)$ and $0<\alpha<1$, $\rho(\alpha X+(1-\alpha)Y) \leq \alpha
    \rho(X)+(1-\alpha)\rho(Y)$.
\end{itemize}
As in the classical convex risk measures, the convexity property leads to a reduction of risk under diversification. Compared to convex risk measures, i.e., those risk measures satisfying the cash-additivity, monotonicity, and convexity properties, convex loss-based risk measures have an additional property---loss-dependence, and on the other hand, they replace the cash-additivity property with a weaker one---cash-loss property.

The following useful lemma shows that the monotonicity and convexity
properties together imply $\LP^\infty(\Omega,\cF,\prob)$-continuity
for a risk measure. A similar result was proved in \cite[Proposition
3.1]{RuszczynskiShapiro2006:OptimizationConvexRiskFunctions}. We
  provide a simpler proof, using elementary results.
\begin{lemma}\label{le:continuity} Any mapping $\rho:\LP^\infty(\Omega,\cF,\prob)\to\mathbb{R}$ that is monotone and convex is continuous on $\LP^\infty(\Omega,\cF,\prob)$.\end{lemma}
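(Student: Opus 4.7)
The plan is to reduce the problem to the continuity of a one-dimensional convex function, using monotonicity as a sandwich.

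First I would fix an arbitrary $X_0 \in \LP^\infty(\Omega,\cF,\prob)$ and consider the auxiliary function $f:\mathbb{R}\to\mathbb{R}$ defined by $f(t) := \rho(X_0 + t)$, where $t$ is identified with the constant random variable. A short check gives that $f$ is both decreasing (by monotonicity of $\rho$, since $X_0 + s \le X_0 + t$ whenever $s \le t$) and convex (since $\alpha(X_0 + s) + (1-\alpha)(X_0 + t) = X_0 + (\alpha s + (1-\alpha) t)$, so convexity of $f$ drops out of convexity of $\rho$ applied to these two points). Because $\rho$ is assumed to take values in $\mathbb{R}$, $f$ is a finite convex function on all of $\mathbb{R}$, hence continuous on $\mathbb{R}$ by the standard real-analysis fact that a finite convex function on an open interval is continuous there.

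Next, I would exploit monotonicity to sandwich $\rho$. For any $Y \in \LP^\infty(\Omega,\cF,\prob)$ with $r := \|Y - X_0\|_\infty$, we have $X_0 - r \le Y \le X_0 + r$ almost surely, so by property (b),
\[
f(r) \;=\; \rho(X_0 + r) \;\le\; \rho(Y) \;\le\; \rho(X_0 - r) \;=\; f(-r).
\]
If $Y_n \to X_0$ in $\LP^\infty(\Omega,\cF,\prob)$, then $r_n := \|Y_n - X_0\|_\infty \to 0$, and continuity of $f$ at $0$ gives $f(r_n), f(-r_n) \to f(0) = \rho(X_0)$. The squeeze yields $\rho(Y_n) \to \rho(X_0)$, which is the desired $\LP^\infty$-continuity at $X_0$.

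The only conceptually nontrivial step is the passage from convexity of $\rho$ on $\LP^\infty$ to continuity of the scalar function $f$; I expect no real obstacle, because reducing to constants lets us invoke the classical one-dimensional result, and finiteness of $f$ on all of $\mathbb{R}$ rules out the usual boundary pathologies of convex functions. The monotonicity assumption, meanwhile, is precisely what allows the one-dimensional continuity of $f$ at $0$ to be upgraded to continuity of $\rho$ at $X_0$ along arbitrary $\LP^\infty$-approximations, without requiring cash-additivity or any linearity in the constant direction.
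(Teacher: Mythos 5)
Your proof is correct, but it takes a genuinely different route from the paper's. You localize all of the convexity into the scalar function $f(t)=\rho(X_0+t)$, note that $f$ is finite, decreasing and convex on $\mathbb{R}$, hence continuous by the classical one-dimensional result, and then upgrade continuity of $f$ at $0$ to $\LP^\infty(\Omega,\cF,\prob)$-continuity of $\rho$ at $X_0$ via the monotonicity sandwich $X_0-r\le Y\le X_0+r$ with $r=\|Y-X_0\|_\infty$, which gives $f(r)\le\rho(Y)\le f(-r)$. The paper instead works directly with an approximating sequence: setting $\varepsilon_n=\|X_n-X\|_\infty$, $\alpha_n=\sqrt{\varepsilon_n}$ and a suitable constant $\beta_n$, it verifies $(1-\alpha_n)X_n-\alpha_n\beta_n\le X$, applies monotonicity and convexity to get $\rho(X)\le(1-\alpha_n)\rho(X_n)+\alpha_n\rho(-(\|X\|_\infty+2))$, deduces lower semicontinuity, and handles upper semicontinuity by a symmetric argument. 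Both proofs use exactly the same two hypotheses together with real-valuedness of $\rho$; yours is more modular, outsourcing the analytic content to the standard fact that a finite convex function on an open interval is continuous and delivering both semicontinuities in one stroke (indeed it even shows that $\rho$ inherits a local Lipschitz modulus from $f$ near $0$), whereas the paper's argument is self-contained, obtaining the needed estimate from scratch through the $\sqrt{\varepsilon_n}$ choice of convex weights without invoking the scalar convex-continuity theorem.
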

\begin{proof}
Consider a   sequence  $(X_n)_{n\geq 1}$ that converges to
$X$ in $\LP^\infty(\Omega,\cF,\prob)$. Then the sequence
$\varepsilon_n:=||X_n-X||_\infty$, $n\ge 1$, converges to zero
and we can assume without loss of generality that $\varepsilon_n\le
1,n\ge 1$. Let $\alpha_n:=\sqrt{\varepsilon_n}$, $\beta_n :=
\frac{\varepsilon_n}{\alpha_n} + ||X||_\infty + 1$, then
\begin{align*}
  (1-\alpha_n)X_n - \alpha_n\beta_n = X_n - \varepsilon_n-\alpha_n(||X||_\infty + 1+X_n)\le X
\end{align*}
where the inequality is because $X_n-X\le ||X_n-X||_\infty$ and
$||X||_\infty + 1+X_n\ge 0$. By the monotonicity and convexity of
$\rho$, we derive
\begin{align*}
  \rho(X)&\le \rho((1-\alpha_n)X_n-\alpha_n\beta_n)\\
  &\le (1-\alpha_n)\rho(X_n) + \alpha_n\rho(-\beta_n)\\
  &\le (1-\alpha_n)\rho(X_n) + \alpha_n\rho(-(||X||_\infty+2)),
\end{align*}
which leads to
\begin{align*}
  \rho(X_n)\ge \frac{\rho(X) - \alpha_n\rho(-(||X||_\infty+2))}{1-\alpha_n}.
\end{align*}
Letting $n\to\infty$ and using the fact that $\alpha_n$ converges to zero, we
immediately derive $\liminf_{n\rightarrow \infty}\rho(X_n)\ge
\rho(X)$, i.e., $\rho$ is lower semi-continuous in
$\LP^\infty(\Omega,\cF,\prob)$. A similar argument leads to the
upper semi-continuity of $\rho$ in $\LP^\infty(\Omega,\cF,\prob)$.
\end{proof}

Convex loss-based risk measures are special cases of cash-subadditive risk measures defined in
\cite{ElKarouiRavanelli:2009CashSubadditive}. Indeed, consider any $X\in \LP^\infty(\Omega,\cF,\prob)$
and $\alpha\ge 0$. For any $\epsilon\in(0,1)$, we have
\begin{align*}
  \rho((1-\epsilon)X - \alpha) &= \rho((1-\epsilon)X + \epsilon(-\frac{\alpha}{\epsilon}))\\
  &\le (1-\epsilon)\rho(X) + \epsilon\rho(-\frac{\alpha}{\epsilon}))\\
  & = (1-\epsilon)\rho(X) + \alpha,
\end{align*}
where the inequality is due to the convexity property and the
last equality is due to the cash-loss property. Letting
$\epsilon\downarrow 0$ and using the semi-continuity obtained in Lemma \ref{le:continuity}, we have
\begin{align*}
  \rho(X-\alpha)\le \rho(X) + \alpha,\quad \forall X\in \LP^\infty(\Omega,\cF,\prob),\alpha\ge 0,
\end{align*}
which also implies
\begin{align*}
  \rho(X+\alpha)\ge \rho(X) - \alpha,\quad \forall X\in \LP^\infty(\Omega,\cF,\prob),\alpha\ge 0.
\end{align*}
Thus, convex loss-based risk measures are cash-subadditive.

Although convex loss-based risk measures are special cases of cash-subadditive risk measures, they deserve to be highlighted and treated separately because the loss-dependence property is a desirable property in some of the risk management practices and thus needs to be enforced in the definition of risk measures. In addition, the cash-loss property, which is not implied by  cash-subadditivity, is also a reasonable property to define risk measures. Therefore, it is of great interest to carefully investigate this particular class of cash-subadditive risk measures---convex loss-based risk measures. Even further, in Sections \ref{sec:robustness} and \ref{sec:sensitivity}, we study the robustness of the risk estimators associated with a family of loss-based risk measures, not necessarily convex and thus not necessarily cash-subadditive.

\subsection{Representation theorem for convex loss-based risk measures}
We now give a characterization of convex loss-based risk measures that
have a semi-continuity property, the Fatou property. A (loss-based)
risk measure $\rho$ satisfies the {\em Fatou property} if, for any
 sequence $(X_n)_{n\geq 1}$ uniformly bounded in $\LP^\infty(\Omega,
\cF,\prob)$ such that $X_n \rightarrow X$ almost surely, we have $\rho(X) \leq \liminf_{n\rightarrow \infty} \rho(X_n)$.

\begin{theorem}\label{th:representationgeneral}
  The following are equivalent
  \begin{enumerate}
    \item $\rho$ is a convex loss-based risk measure satisfying the Fatou property.
    \item There exists a convex function $V :\cM(\Omega,\cF,\bP)\rightarrow [0,\infty]$ satisfying
    \begin{align}\label{eq:representation_conditionV}
      \inf_{\| Y \|_{1} \geq 1-\epsilon} V(Y) = 0 \quad \text{ for any } \quad \epsilon \in (0,1),
    \end{align}
     such that
     \begin{align}\label{eq:representation}
       \rho(X) =  - \inf_{Y \in \cM(\Omega,\cF,\bP)} \{\expect[(X\wedge 0)Y] + V(Y)\},\quad \forall X\in \LP^\infty(\Omega, \cF,\prob).
     \end{align}
  \end{enumerate}
\end{theorem}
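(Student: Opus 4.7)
The plan is to prove the equivalence by the classical convex-duality route, transforming $\rho$ into an object to which the Fenchel--Moreau theorem applies directly. The direction (2) $\Rightarrow$ (1) is a routine verification: for $\rho$ defined by (\ref{eq:representation}), cash-loss normalization follows by substituting $X=-\alpha$ (so $\rho(-\alpha)=\sup_{Y\in\cM(\Omega,\cF,\bP)}\{\alpha\|Y\|_1-V(Y)\}$, which is bounded above by $\alpha$ since $\|Y\|_1\le 1$ and $V\ge 0$, and attained in the limit by (\ref{eq:representation_conditionV})); monotonicity and loss-dependence are immediate from $Y\ge 0$ and the presence of $X\wedge 0$; for convexity, $X\mapsto X\wedge 0$ is concave and $Y\ge 0$, so $X\mapsto\expect[(X\wedge 0)Y]+V(Y)$ is concave for each $Y$, making $\rho$ the supremum of convex functions; the Fatou property follows from bounded convergence inside the infimum.

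For the harder direction (1) $\Rightarrow$ (2), I introduce the auxiliary functional $\tilde\rho(X):=\rho(-X)$ on $\LP^\infty(\Omega,\cF,\prob)$. The axioms in Definition \ref{de:riskmeasure} translate to: $\tilde\rho$ is convex, monotone \emph{increasing}, equals $\alpha^+$ on constants (using $\rho(0)=0$, which follows from cash-loss normalization at $\alpha=0$), satisfies $\tilde\rho(X)=\tilde\rho(X^+)$ by loss-dependence, and inherits the Fatou property. Lemma \ref{le:continuity} gives norm continuity, and the Fatou property upgrades this to lower semi-continuity with respect to $\sigma(\LP^\infty,\LP^1)$ via the Krein--\v{S}mulian theorem (exactly as in the standard proof of the F\"ollmer--Schied representation). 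Fenchel--Moreau then yields
\[\tilde\rho(X)=\sup_{Y\in\LP^1}\{\expect[XY]-V(Y)\},\qquad V(Y):=\sup_{X\in\LP^\infty}\{\expect[XY]-\tilde\rho(X)\},\]
with $V$ automatically convex.

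It remains to identify the effective domain of $V$ with $\cM(\Omega,\cF,\bP)$ and to extract (\ref{eq:representation_conditionV}). If $\bP(Y<0)>0$, choose $A\subset\{Y<0\}$ of positive measure and test with $X_n:=-n\mathbf{1}_A$: then $\tilde\rho(X_n)=\tilde\rho(X_n^+)=\tilde\rho(0)=0$ while $\expect[X_nY]\to+\infty$, forcing $V(Y)=+\infty$. If $Y\ge 0$ but $\|Y\|_1>1$, testing with the constants $X_n=n$ gives $V(Y)\ge n(\|Y\|_1-1)\to\infty$. Hence the supremum may be restricted to $Y\in\cM(\Omega,\cF,\bP)$, and $V\ge 0$ follows by testing with $X=0$. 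For (\ref{eq:representation_conditionV}), from $\tilde\rho(\alpha)=\alpha$ for every $\alpha\ge 0$, suppose $\inf_{\|Y\|_1\ge 1-\epsilon}V(Y)\ge c>0$ for some $\epsilon\in(0,1)$; splitting $\cM(\Omega,\cF,\bP)$ by the value of $\|Y\|_1$ bounds $\sup_Y\{\alpha\|Y\|_1-V(Y)\}$ by $\max(\alpha(1-\epsilon),\alpha-c)$, which is strictly less than $\alpha$ for $\alpha$ large enough, a contradiction. Translating back, $\rho(X)=\rho(X\wedge 0)=\tilde\rho(-(X\wedge 0))=-\inf_{Y\in\cM(\Omega,\cF,\bP)}\{\expect[(X\wedge 0)Y]+V(Y)\}$, which is exactly (\ref{eq:representation}). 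The main technical obstacle is the $\sigma(\LP^\infty,\LP^1)$ lower semi-continuity of $\tilde\rho$ via Krein--\v{S}mulian; once this is in place, the rest of the argument is essentially bookkeeping.
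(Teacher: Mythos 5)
Your proof is correct and follows essentially the same route as the paper: Fatou property $\Rightarrow$ weak$^{*}$ lower semicontinuity and then convex duality (the paper cites Delbaen--Schachermayer and Frittelli--Rosazza Gianin where you invoke Krein--\v{S}mulian and Fenchel--Moreau, which is the same underlying machinery), followed by the identical test-function arguments to confine the dual domain to $\cM(\Omega,\cF,\bP)$, show $V\ge 0$, and derive \eqref{eq:representation_conditionV} from $\rho(-\alpha)=\alpha$. Working through $\tilde\rho(X)=\rho(-X)$ is only a cosmetic sign change, and your explicit check of (2)$\Rightarrow$(1) simply spells out what the paper leaves to the reader.
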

\begin{proof}
 Assume that $\rho$ is a convex loss-based risk measure satisfying the Fatou property. Following Theorem 2.1 in \cite{DelbaenFSchachermayerW:94na} or Theorem 3.2 in \cite{delbaen2002coherent}, $\rho$ is lower-semi-continuous under weak* topology if and only if it satisfies the Fatou property. By  \cite[Theorem 6]{frittelli2002por} there exists
  $V :\LP^1(\Omega,\cF,\bP)\rightarrow (-\infty,\infty]$ such that
  \begin{align*}
       \rho(X) &=  - \inf_{Y \in \LP^1(\Omega,\cF,\bP)} \{\expect[XY] + V(Y)\}\\
       &=- \inf_{Y \in \LP^1(\Omega,\cF,\bP)} \{\expect[(X\wedge 0)Y] + V(Y)\},\quad \forall X\in \LP^\infty(\Omega, \cF,\prob),
  \end{align*}
  where the second equality is due to the loss-dependence property. Furthermore, we have the dual relation
  \begin{align*}
    V(Y) = \sup_{X\in \LP^\infty(\Omega, \cF,\prob)} \{-\rho(X) - \expect[XY]\},\quad \forall Y\in \LP^1(\Omega, \cF,\prob).
  \end{align*}

For any $Y\in \LP^1(\Omega, \cF,\prob)$, let $A:=\{Y<0\}$. If $\prob(A)>0$,
  \begin{align*}
    V(Y) &\ge \sup_{n\ge 1} \{-\rho(n\mathbf{1}_A) - \expect[n\mathbf{1}_AY]\} = \sup_{n\ge 1} \{- \expect[n\mathbf{1}_AY]\} = +\infty,
  \end{align*}
    where the first equality holds because $n\mathbf{1}_A \geq 0$ and $\rho$ is monotone with $\rho(0)=0$. Now, if $Y\ge 0$ and $\|Y\|_{1}>1$, we have
  \begin{align*}
    V(Y) & \ge \sup_{\alpha\ge 0} \{-\rho(-\alpha) - \expect[(-\alpha)Y]\}= \sup_{\alpha\ge 0} \{(\expect[Y]-1)\alpha\}=+\infty.
  \end{align*}
  Thus, the domain of $V$ lies in $\cM(\Omega,\cF,\bP)$. Next, it is easy to see that $V(Y)\ge 0$ for any $Y\in \LP^1(\Omega, \cF,\prob)$ because $\rho(0)=0$. Finally, for any $\epsilon \in(0,1)$ and any $\alpha>0$,
  \begin{align*}
    \alpha &= \rho(-\alpha) =  \sup_{Y \in \cM(\Omega,\cF,\bP)} \{\alpha\expect[ Y] - V(Y)\}\\
    & =  \max\left(\sup_{Y \in \cM(\Omega,\cF,\bP),\|Y\|_1< 1-\epsilon} \{\alpha\expect[Y] - V(Y)\},\sup_{Y \in \cM(\Omega,\cF,\bP),\|Y\|_1\ge 1-\epsilon} \{\alpha\expect[Y] - V(Y)\}\right)\\
    &\le \max\left(\alpha(1-\epsilon),\sup_{Y \in \cM(\Omega,\cF,\bP),\|Y\|_1\ge 1-\epsilon} \{\alpha\expect[Y] - V(Y)\}\right)\\
    &\le \max\left( \alpha(1-\epsilon), \alpha -\inf_{Y \in \cM(\Omega,\cF,\bP),\|Y\|_1\ge 1-\epsilon}\{ V(Y)\}\right).
  \end{align*}
  Thus, we conclude that $V(\cdot)$ must satisfy \eqref{eq:representation_conditionV}.

  On the other hand, one can check that $\rho$ represented in \eqref{eq:representation} is a convex loss-based risk measure satisfying the lower-semi-continuity under weak* topology and thus the Fatou property.
\end{proof}

We can see in the representation theorem that the domain of the
penalty function $V(\cdot)$ is a subset of $\cM(\Omega,\cF,\bP)$,
the set of all positive measures with total mass less than one, which contrasts with the representation theorem for convex risk measures in which the domain of the penalty function is a subset of $\cP(\Omega,\cF,\bP)$. This difference
is also observed in \cite[Theorem 4.3-(b)]{ElKarouiRavanelli:2009CashSubadditive}. Compared to the representation theorem in
\cite{ElKarouiRavanelli:2009CashSubadditive}, we have an additional
condition \eqref{eq:representation_conditionV}, which is due to the cash-loss
property. Moreover, the dual representation formula
\eqref{eq:representation} depends only on the negative part of $X$
due to the loss-dependence property.

If we compare the results in Theorem \ref{th:representationgeneral} with the representation of the general notion of risk orders in Theorem 2.6 of \cite{DrapeauKupper2010:RiskPreferences}, then we notice that Equation  \eqref{eq:representation} represents a particular set of monotone quasi-convex risk functionals $R$ in \cite{DrapeauKupper2010:RiskPreferences} due to the additional features of convex loss-based risk measures such as the convexity and cash-loss properties.

\subsection{Statistical loss-based risk measures}

Most of the risk measures used in finance are statistical, or distribution-based risk measures, i.e. they depend on $X$ only through its distribution $F_X(\cdot)$:
$$ F_X(\cdot) = F_Y(\cdot) \quad \Rightarrow  \quad \rho(X)=\rho(Y).$$

Following ideas from \cite{kusuoka2001lic},
\cite{FrittelliMEmanuelaRG:05lawinvariantriskmeasure}, \cite{Ruschendorf2006:LawInvariantConvexRiskMeasures},
\cite{jouini2006law}, and \cite{follmer2011sfi}, we derive a representation theorem for statistical convex
loss-based risk measures.
\begin{theorem}\label{th:representationlawinvariant}
  Let $\rho$ be a statistical convex loss-based risk measure.
  There exists a convex function $v :\density((0,1))\rightarrow [0,\infty]$ satisfying
    \begin{align}\label{eq:penaltyfunctionlawinvariant}
      \inf_{\int_0^1\phi(z)dz\ge 1-\epsilon} v(\phi) = 0\text{ for any }\epsilon \in (0,1),
    \end{align}
     such that
     \begin{align}\label{eq:representationlawinvariant}
       \rho(X) =  - \inf_{\phi \in \density} \left\{\int_0^1(G_X(z)\wedge 0)\phi(z)dz + v(\phi)\right\},\quad \forall X\in \LP^\infty(\Omega, \cF,\prob).
     \end{align}
\end{theorem}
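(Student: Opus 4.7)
The plan is to derive Theorem \ref{th:representationlawinvariant} from the general representation in Theorem \ref{th:representationgeneral} by exploiting law-invariance through classical rearrangement arguments, in the spirit of Kusuoka and Jouini--Schachermayer--Touzi. Since $\rho$ is a statistical convex loss-based risk measure, Lemma \ref{le:continuity} renders it real-valued and continuous on $\LP^\infty(\Omega,\cF,\prob)$, hence in particular Fatou-continuous, so Theorem \ref{th:representationgeneral} produces a convex penalty $V:\cM(\Omega,\cF,\prob)\to[0,\infty]$ satisfying the dual identity \eqref{eq:representation} and the normalization \eqref{eq:representation_conditionV}.

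The first step is to show that $V$ may be taken law-invariant. Starting from the dual formula $V(Y)=\sup_{X\in\LP^\infty}\{-\rho(X)-\expect[XY]\}$, for any measure-preserving transformation $\tau$ of $(\Omega,\cF,\prob)$ the substitution $X\mapsto X\circ\tau$ together with law-invariance of $\rho$ yields $V(Y\circ\tau)=V(Y)$. On an atomless probability space any two elements of $\LP^1$ with the same distribution are related by such a transformation, so $V$ depends only on the distribution of $Y$.

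The second step converts the infimum in \eqref{eq:representation} into one over $\density((0,1))$. Fix $X\in\LP^\infty(\Omega,\cF,\prob)$ and set $Z:=X\wedge 0\le 0$, whose left-continuous quantile is $G_Z(z)=G_X(z)\wedge 0$. The Hardy--Littlewood rearrangement inequality gives
\begin{align*}
  \expect[ZY]\;\ge\;\int_0^1 G_Z(z)\,G_Y(1-z)\,dz
\end{align*}
for every $Y\in\cM(\Omega,\cF,\prob)$, and atomlessness of $(\Omega,\cF,\prob)$ guarantees the existence of $\tilde Y$ with the same distribution as $Y$ but anti-comonotonic with $Z$, so equality is attained at $\tilde Y$. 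By the law-invariance just established, replacing $Y$ with $\tilde Y$ in \eqref{eq:representation} leaves $V(Y)$ unchanged while realizing the rearrangement bound. Setting $\phi(z):=G_Y(1-z)$ produces a nonnegative, decreasing function on $(0,1)$ with $\int_0^1\phi(z)\,dz=\|Y\|_1\le 1$, hence $\phi\in\density((0,1))$; conversely, on an atomless space every $\phi\in\density((0,1))$ is the decreasing rearrangement of some $Y\in\cM(\Omega,\cF,\prob)$ (e.g.\ $Y=\phi(1-U)$ for $U$ uniform on $(0,1)$). Defining $v(\phi):=V(Y)$ for any such $Y$ is unambiguous by law-invariance, and substitution yields \eqref{eq:representationlawinvariant}.

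It remains to verify convexity of $v$ and the normalization \eqref{eq:penaltyfunctionlawinvariant}. For convexity, given $\phi_1,\phi_2\in\density((0,1))$ I would choose \emph{comonotonic} representatives $Y_i=\phi_i(1-U)\in\cM(\Omega,\cF,\prob)$ with a common uniform $U$, so that $G_{\lambda Y_1+(1-\lambda)Y_2}=\lambda G_{Y_1}+(1-\lambda)G_{Y_2}$ and hence the decreasing rearrangement of $\lambda Y_1+(1-\lambda)Y_2$ equals $\lambda\phi_1+(1-\lambda)\phi_2$; convexity of $V$ then delivers $v(\lambda\phi_1+(1-\lambda)\phi_2)\le\lambda v(\phi_1)+(1-\lambda)v(\phi_2)$. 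The normalization \eqref{eq:penaltyfunctionlawinvariant} is immediate from \eqref{eq:representation_conditionV} via the identity $\|Y\|_1=\int_0^1\phi(z)\,dz$. The main technical obstacle is the first step---transferring law-invariance from $\rho$ to $V$---since one must invoke atomlessness (or equivalently the Chong--Hardy--Littlewood--P\'olya machinery) to upgrade invariance under measure-preserving transformations to full dependence only on the distribution.
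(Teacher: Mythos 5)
Your overall strategy is the same as the paper's: establish the Fatou property so that Theorem \ref{th:representationgeneral} applies, then use law-invariance together with Hardy--Littlewood rearrangement to convert the infimum over $\cM(\Omega,\cF,\bP)$ into an infimum over $\density((0,1))$; the paper simply delegates this second step to Theorem 2.1 of Jouini--Schachermayer--Touzi, whereas you spell it out, and your treatment of the convexity of $v$ and of the normalization \eqref{eq:penaltyfunctionlawinvariant} is fine. However, there is a genuine gap at the very first step: you assert that Lemma \ref{le:continuity} makes $\rho$ continuous on $\LP^\infty(\Omega,\cF,\prob)$ ``hence in particular Fatou-continuous.'' Norm continuity on $\LP^\infty$ does \emph{not} imply the Fatou property: almost sure convergence of a uniformly bounded sequence does not give $\LP^\infty$-convergence, and there exist monotone, convex, norm-continuous functionals on $\LP^\infty$ (e.g.\ expectations under purely finitely additive measures) that violate Fatou. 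The implication you need is precisely the content of Theorem 2.2 of Jouini--Schachermayer--Touzi: for \emph{law-invariant} convex functionals on $\LP^\infty$ over an atomless space, norm lower semicontinuity is equivalent to the Fatou property. The paper invokes exactly this result; without it, your appeal to Theorem \ref{th:representationgeneral} (whose proof represents $\rho$ over $\LP^1$ densities rather than over $ba$) is unjustified. The gap is fixable within your own framework, since you do assume $\rho$ is statistical, but as written the step fails.

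A secondary, more technical point: your transfer of law-invariance from $\rho$ to $V$ rests on the claim that any two equidistributed elements of $\LP^1$ on an atomless space are related by a measure-preserving transformation. This is delicate (it requires a standard/separable space, or a density/approximation argument), and it is not how the cited machinery proceeds: the standard route is to note that, by law-invariance of $\rho$, $V(Y)=\sup_{G}\{-\rho(G)-\inf_{X\sim G}\expect[XY]\}$ and that the inner infimum equals the Hardy--Littlewood bound $\int_0^1 G(z)G_Y(1-z)\,dz$ on an atomless space, which exhibits $V(Y)$ as a function of the law of $Y$ directly. Your subsequent use of an anti-comonotone representative $\tilde Y\sim Y$ attaining the bound is correct (take a uniform $U$ with $X\wedge 0=G_{X\wedge 0}(U)$ and set $\tilde Y=G_Y(1-U)$), so rephrasing the law-invariance step along these lines would close this second gap without changing the architecture of your argument.
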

\begin{proof}
First, we remark that a statistical convex loss-based risk measure necessarily
satisfies the Fatou property. Indeed, as noted in Theorem 2.2,
\cite{jouini2006law}, distribution-based convex functionals in
$\LP^\infty(\Omega,\cF,\prob)$ satisfy the Fatou property if and only if
they are lower-semi-continuous under the $\LP^\infty(\Omega,\cF,\prob)$
norm. By Lemma \ref{le:continuity}, any convex loss-based risk measure is continuous under the $\LP^\infty(\Omega,\cF,\prob)$ norm, so automatically satisfies the Fatou
property if it is distribution-based. We then need to build a
connection between \eqref{eq:representation} and
\eqref{eq:representationlawinvariant}
  when the risk measure is distribution-based, which can be done following the lines of  \cite[Theorem 2.1]{jouini2006law}.
\end{proof}

We can observe that in the representation theorem the domain of the penalty function $v(\cdot)$ is a subset of $\density$, which by definition is the set of positive measures on $(0,1)$ that have decreasing densities and have total mass less than or equal to one. By contrast, in the representation theorem for statistical convex risk measures, the domain of the penalty function is a subset of $\unitdensity$, which is a set of probability measures on $(0,1)$ with decreasing densities.

Motivated by the representation \eqref{eq:representationlawinvariant}, we sometimes abuse the notation by writing
\begin{align}
  \rho(G(\cdot)) = - \inf_{\phi \in \density} \left\{\int_0^1(G(z)\wedge 0)\phi(z)dz + v(\phi)\right\}
\end{align}
for any bounded quantile functions, if we are considering a statistical convex loss-based risk measure.

\subsection{Loss-based versione of convex risk measures}

For any convex risk measure $\widetilde \rho$, we can define a new risk measure $\rho$ by applying $\rho$ to the loss part of each portfolio's P\&L, i.e., $\rho(X):=\widetilde \rho(X\wedge 0)$ for any $X\in\LP^\infty(\Omega, \cF,\prob)$. It is easy to verify that $\rho$ is a convex loss-based risk measure. We call $\rho$ the loss-based version of $\widetilde \rho$.

In the following, we show that a convex loss-based risk measure $\rho$ is the loss-based version of some convex risk measure if and only if it satisfies
  \begin{itemize}
    \item[\bf (e)] {\em cash-loss additivity:} for any $X\in \LP^\infty(\Omega, \cF,\prob)$, $X\le 0$, and $\alpha\in \mathbb{R}_+$, $\rho(X-\alpha)=\rho(X)+\alpha$.
  \end{itemize}
The cash-loss additivity property says that for a portfolio that generates a pure loss, extracting certain amount of cash from the portfolio will increase its risk by the same amount.

On the one hand, if $\rho(X)= \widetilde \rho(X\wedge 0)$ for certain convex risk measure $\widetilde \rho$, then for any $X\in \LP^\infty(\Omega, \cF,\prob)$, $X\le 0$, and $\alpha\in \mathbb{R}_+$,
\begin{align*}
  \rho(X-\alpha) = \widetilde \rho(X-\alpha) = \widetilde \rho(X)+\alpha = \rho(X)+\alpha,
\end{align*}
where the second equality is due to the cash-additivity property of $\widetilde \rho$.

On the other hand, suppose a convex loss-based risk measure $\rho$ satisfies the cash-loss additivity property. Define
  \begin{align*}
    \widetilde \rho(X) =  \rho(X-\alpha_X)-\alpha_X,
  \end{align*}
  where $\alpha_X$ is any upper-bound of $X$. By the cash-loss additivity property for $\rho$, $\widetilde \rho$ is well-defined. Furthermore, it is easy to check that $\widetilde \rho$ is a convex risk measure and $\rho(X) = \rho(X\wedge 0) = \widetilde\rho(X\wedge 0)$.

  However, even though $\rho$ satisfies the cash-loss additivity property, it is still not cash additive. Indeed, in general, we only have
  \begin{align*}
    \rho(X+\alpha) = \widetilde\rho((X+\alpha)\wedge 0)\ge \widetilde\rho(X\wedge 0 + \alpha) = \widetilde\rho(X\wedge 0) - \alpha = \rho(X) - \alpha,
  \end{align*}
  for any $X\in \LP^\infty(\Omega, \cF,\prob)$ and $\alpha \in \mathbb{R}$.

A natural question is whether any convex loss-based risk measure satisfies the cash-loss additivity, i.e., whether it is the loss-based version of certain convex risk measure. The answer is no, which is illustrated by a nontrivial and meaningful example presented in the following section. As a result, convex loss-based risk measures are nontrivial extensions of convex risk measures.

\subsection{Examples}\label{sec:examples}
\begin{example}[Put option premium]
  \cite{jarrow02} argues that a natural measure of a firm's insolvency risk is the premium of a put option
  on the firms equity, which is given by the positive part of its net value $X$ (assets minus liabilities), i.e. $\expect^{\mathbb{Q}}[- \min(X,0)]$ where ${\mathbb{Q}}$ is an appropriately chosen
   pricing model.
  One can generalize this to any portfolio whose net value is represented by a random variable $X$:  the downside risk of the
  portfolio can be measured by
  \begin{align}\label{eq:putpremium}
    \rho(X):= \expect[- \min(X,0)]  \end{align}
 This example satisfies all the properties in Definition
 \ref{de:riskmeasure} and the convexity property: it is a convex loss-based risk measure. In particular, as noted by \cite{jarrow02}, it is not cash-additive. In the actuarial literature such risk measures have existed for more than 150 years, see \cite{Hattendorff1868}.
\end{example}

\begin{example}[Scenario-based margin requirements]
 It is easy to verify that the margin requirement \eqref{eq:SPAN} that is used in the CME is a convex loss-based risk measure. This method of determining margin requirements is known as the SPAN method. Interestingly, this method was considered as an
 initial motivation for the definition of coherent risk measures in
 \cite{artzner1999cmr}. Yet it is easy to check that the margin requirement \eqref{eq:SPAN} is not cash-additive, thus not a coherent risk measure.
\end{example}

\begin{example}[Expected tail-loss]
  This popular risk measure is defined as
  \begin{align}\label{eq:esriskmeasure}
    \rho(X):=-\frac{1}{\beta}\int_0^\beta(G_X(z)\wedge 0)dz,\quad X\in \LP^\infty(\Omega, \cF,\prob)
  \end{align}
  Note that, by construction, this risk measure focuses on the left tail of the loss distribution, since it only involves the quantile function on $(0,\beta)$. Nonetheless, the classical definition of expected shortfall $-\frac{1}{\beta}\int_0^\beta G_X(z) dz$ does not satisfy the loss-dependence property in Definition \ref{de:riskmeasure} since $G_X(\beta)$ might be greater than $0$ for some $X$. Therefore, we insert $G_X(z)\wedge 0$ in its definition to turn it into a loss-based risk measure. We notice that the put option premium is an expected tail-loss by taking $\beta =1$.
\end{example}

\begin{example}[Spectral loss measures]
A large class of statistical loss-based risk measures is obtained
by taking weighted averages of loss quantiles with various weight
functions  $\phi\in\unitdensity$:
  \begin{align}\label{eq:spectralriskmeasure}
    \rho(X):=-\int_0^1(G_X(z)\wedge 0)\phi(z)dz,\quad X\in \LP^\infty(\Omega, \cF,\prob)
  \end{align}
We call such a  risk measures a {\em  spectral loss measure}. By definition, this risk measure is the loss-based version of the spectral risk measures defined by \cite{acerbi2002smr}. As a result, it is a convex loss-based risk measure, and in addition it satisfies the positive homogeneity property:
for any $\lambda>0$, $\rho(\lambda X) = \lambda \rho(X)$. Notice
that the expected tail-loss is a spectral loss measure with
$\phi(z)=\frac{1}{\beta}1_{(0,\beta)}(z)$.
\end{example}

\begin{example}[Loss certainty equivalent]
  Consider $u(\cdot)\in C^4(\bR_+)$ which is strictly increasing and strictly convex. Assume $u'/u''$ is concave. Consider the following mapping:
  \begin{align}\label{eq:CEriskmeasure}
    \rho(X) := u^{-1}\left(\expect u(|X\wedge 0|)\right),\quad X\in \LP^\infty(\Omega, \cF,\prob).
  \end{align}
  It is clear that $\rho$ satisfies the cash-loss, monotonicity, and loss-dependence properties. Therefore, it is a loss-based risk measure. On the other hand,  by \cite[Theorem 106]{HardyLittlewoodPolya:1959Inequalities}, $\rho$ is also convex. Thus, $\rho$ is a convex loss-based risk measure. We call $\rho$ a {\em loss certainty equivalent }. By definition, $\rho$ is distributional-based and
  \begin{align}\label{eq:CEriskmeasurestatistical}
    \rho(X) = u^{-1}\left(\int_0^1u(|G_X(z)\wedge 0|)dz\right),\quad X\in \LP^\infty(\Omega, \cF,\prob).
  \end{align}

  If $u(x) = x^p,x\ge 0$ for certain $p\ge 1$, we will speak of the {\em $L^p$ loss certainty equivalent }.
  Here, when $p=1$, $u(\cdot)$ is not strictly convex. However, in this case, \eqref{eq:CEriskmeasure} is still well-defined and the risk measure is actually the put option premium, which is also a convex loss-based risk measure. Thus, we include the case $p=1$ here and identify the put option premium as a special case of $L^p$ loss certainty equivalent  in the following.
  One can show that the $L^p$ loss certainty equivalent is the only loss certainty equivalent that satisfies the positive homogeneity property. The $L^p$ loss certainty equivalent  has the following dual representation
  \begin{align}\label{eq:Lpriskmeasurerepresentation}
    \rho(X) = -\inf_{Y\in \cM^q(\Omega,\cF,\prob)}\expect\left[(X\wedge 0) Y\right]
  \end{align}
  where $1<q\le \infty$ is the conjugate of $p$, i.e., $\frac{1}{p}+\frac{1}{q}=1$, and $\cM^q(\Omega,\cF,\prob)$ is the set of all nonnegative random variables with $L^q$ norms less than or equal to one. Moreover, it has the distribution-based representation
  \begin{align}\label{eq:Lpriskmeasurerepresentationlawinvariant}
    \rho(X) = -\inf_{\phi\in \density^q((0,1))}\int_0^1(G_X(z)\wedge 0)\phi(z)dz.
  \end{align}
  where $\density^q((0,1))$ is the set of $\phi\in\density((0,1))$ such that $\int_0^1\phi(z)^qdz\le 1$.

  The $L^p$ certainty equivalents are closely related to the {\em lower partial moments} in \cite{Fishburn1977:MeanRiskAnalysisBelowTargetReturns}. It is straightforward to show that $L^p$ loss certainty equivalents for $p>1$ do not satisfy the cash-loss additivity property, so are loss-based versions of certain convex risk measures. Therefore, $L^p$ loss certainty equivalents, which are an important family of risk measures and cannot be accommodated in the framework of convex risk measures, are convex loss-based risk measures.

  Let $u(x) = e^{\beta x},x\ge 0$ for certain $\beta>0$, then the loss certainty equivalent becomes the {\em entropic loss certainty equivalent }, a loss-based version of the entropic risk measure studied in \cite{frittelli2002por}, \cite{FollmerHSchiedA:02convexriskmeasure}, and \cite{follmer2011sfi}. Actually, one can show that the entropic loss certainty equivalent and $L^1$ loss certainty equivalent are the only loss certainty equivalents that satisfy the cash-loss additivity property, see for instance Proposition 2.46 in \cite{follmer2011sfi}.

  By recalling the representation theorems for the entropic risk measure,\footnote{For the representation theorems for the entropic risk measure, see for instance, \cite{frittelli2002por,FrittelliMEmanuelaRG:05lawinvariantriskmeasure}, \cite{FollmerHSchiedA:02convexriskmeasure}, and \cite{follmer2011sfi}.} we obtain the dual representation for the entropic loss certainty equivalent
  \begin{align}\label{eq:entropyriskmeasurerepresentation}
    \rho(X) = -\inf_{Y\in \cP(\Omega,\cF,\prob)}\{\expect[(X\wedge 0)Y] + V(Y)\},
  \end{align}
  where
  \begin{align}
    V(Y) = \expect\left[Y\ln Y\right]- \inf_{Y\in \cP(\Omega,\cF,\prob)}\expect\left[Y\ln Y\right],
  \end{align}
  and the distributional-based representation
  \begin{align}\label{eq:entropyriskmeasurerepresentationlawinvariant}
    \rho(X) = -\inf_{\phi\in \unitdensity((0,1))}\int_0^1(G_X(z)\wedge 0)\phi(z)dz + v(\phi),
  \end{align}
  where
  \begin{align}
    v(\phi) = \int_0^1\phi(z)\ln \phi(z)dz-\inf_{\phi\in\unitdensity((0,1))} \int_0^1\phi(z)\ln \phi(z)dz.
  \end{align}
\end{example}

\section{Robustness of Risk Estimators}
\label{sec:robustness} In practice, for measuring the risk of a
portfolio, aside from the theoretical choice of a risk measure, a
key issue is the estimation of the risk measure, which requires the
choice of a {\it risk estimator} \citep{contdeguestscandolo}.
 In this
section we study the robustness property of empirical risk
estimators built from some statistical loss-based risk measure. We
follow the ideas in \citet{contdeguestscandolo} but we will view
risk measures as functionals on the set of quantile functions,
rather than the set of distribution functions. As we will see, this
makes the study of continuity properties of loss-based risk measures
easier. Moreover, from Theorem \ref{th:representationlawinvariant}, a statistical convex loss-based risk measure can be represented by \eqref{eq:representationlawinvariant}, which suggests that it is more natural to work directly on quantile functions. Note that the quantile representation of entropic risk measures and their different variants also appear in \cite{FollmerKnispel2012:ConvexCapitalRequirments}.

Denote by $\quan$ the set of all quantile functions and by $\cdf$ the set of all distribution functions. The L\'evy-Prokhorov metric between two distribution functions $F_1\in\cdf$ and $F_2\in\cdf$ is defined as
\begin{equation*}
   d_P(F_1,F_2)\triangleq\inf\{\epsilon>0
\,:\,F_1(x-\epsilon)-\epsilon\leq F_2(x)\leq F_1(x+\epsilon)+\epsilon,\;\forall x\in\bR\}.
\end{equation*}
This metric appears to be the most tractable one on $\cdf$ and it induces the same topology as the usual weak topology on $\cdf$.

The quantile set $\quan$ and the distribution set $\cdf$ are connected by the following one-on-one correspondence
\begin{align*}
  \begin{array}{ccc}
  \cdf &\rightarrow &\quan\\
  F(\cdot) & \mapsto & F^{-1}(\cdot)
  \end{array}
\end{align*}
where
\begin{align*}
  F^{-1}(t) = \inf\left\{x\in\mathbb{R}\mid F(x)\ge t\right\},\quad t\in (0,1)
\end{align*}
is the left continuous inverse of $F(\cdot)$. Such a correspondence, together with the L\'evy-Prokhorov metric on $\cdf$ induces a metric on $\quan$ which we denote by $d$. The convergence under this metric can be characterized by the following: for any $G_n,G\in \quan$, $G_n\rightarrow G$ if and only if $G_n(z)\rightarrow G(z)$ at any continuity points of $G$. In the following, we only need the characterization of the convergence on $\quan$, so the choice of metric on $\quan$ is irrelevant once it leads to the same topology.

Most of the time, we work with quantile functions that are continuous on $(0,1)$ in order to avoid irregularities due to the presence of atoms. In practice, it is not restrictive to focus on continuous quantile functions. Indeed, people do assume the continuity of quantile functions in many applications, e.g., when computing the VaR. The study of discontinuous quantile functions is more technical and of little interest, so we choose not to pursue in this direction. In the following, we denote by $\quan_c$ the set of all continuous quantile functions. We also denote by $\quan^\infty$ the set of all bounded quantile functions and by $\quan^\infty_c$ the set of all bounded continuous quantile functions.

\subsection{A family of statistical loss-based risk measures}
Motivated by the representation \eqref{eq:representationlawinvariant}, we consider a family of statistical loss-based risk measures defined by
the following {\em Fenchel-Legendre transform}:
\begin{align}
\label{eq:rhofunction}
  \rho(G(\cdot)):= -\inf_{m\in \dom(v)}\left\{\int_{(0,1)} (G(z) \wedge 0) m(dz) + v(m)\right\},
\end{align}
where $v:\cM((0,1))\rightarrow [0,\infty]$ is the {\em penalty function} satisfying
\begin{align}\label{eq:penaltyfunction}
      \inf_{m((0,1))\ge 1-\epsilon} v(m) = 0\text{ for any }\epsilon \in (0,1),
\end{align}
and $\dom(v)$ is the domain of $v$, i.e.,
\begin{align}
  \dom(v):=\{m\in \cM((0,1))\mid v(m)<\infty\}.
\end{align}
It is easy to see that $\rho(G(\cdot))$ is well-defined for any
$G(\cdot)\in \quan$, and $\dom(\rho)$, the domain of $\rho$,
contains $\quan^\infty$.

Straightforward calculation shows that the risk measure $\rho$ defined in \eqref{eq:rhofunction} satisfies the cash-loss, monotonicity, and loss-dependence properties, so it is a loss-based risk measure. It is clear to see from the representation \eqref{eq:representationlawinvariant} that statistical convex loss-based risk measures are special cases of the risk measures in \eqref{eq:rhofunction} by identifying each element in $\density$ as the density of a certain measure in $\cM((0,1))$. The risk measures \eqref{eq:rhofunction}, however, do not necessarily satisfy the convexity property. For instance, if we choose $\dom(v)=\{\delta_{\alpha}\}$ and $v(\delta_{\alpha})=0$, where $\delta_\alpha$ is the Dirac measure at $\alpha\in(0,1)$, then $\rho(G(\cdot)) = -G(\alpha)\wedge 0$, which is the $\alpha$-level VaR on losses. It is well-known that despite its wide use in practice, VaR does not satisfy the convexity property. To conclude, the risk measures defined by \eqref{eq:rhofunction} are a rich family that include both statistical convex loss-based risk measures that entail the convexity property and the VaR on losses that is popular in practice.

Finally, it is easy to observe that although $\rho$ in \eqref{eq:rhofunction} does not satisfy the convexity property, when viewed as a mapping on $\quan$, it satisfies the
following property.
\begin{enumerate}
  \item[{\bf (d')}] {\em Quantile convexity:} for any $G_1(\cdot),G_2(\cdot)\in \quan$ and $0<\alpha<1$, $\rho(\alpha G_1(\cdot)+(1-\alpha)G_2(\cdot))\leq \alpha \rho(G_1(\cdot))+(1-\alpha)\rho(G_2(\cdot))$.
\end{enumerate}
Quantile convexity is related to  co-monotonic subadditivity, discussed in \cite{heyde2006grm}, and co-monotonic convexity  \citep{SongYan:2009riskmeasure}.

\subsection{Qualitative robustness}
In practice, in order to compute the risk measure $\rho(G(\cdot))$ of a portfolio whose P\&L has quantile $G(\cdot)$, one has first to estimate the distribution or quantile of the portfolio's P\&L from data, and then apply the risk measure $\rho$ to the estimated distribution or quantile. One of the popular ways is to apply the risk measure to the empirical distribution.

A {\em sequence of samples} of $G(\cdot)\in \quan$ is a sequence of random variables $X_1,X_2,\cdots$ which are i.i.d and follow the distribution $G^{-1}$. We denote by $\mathbf{X}$ this sequence of samples and $\mathbf{X}^n$ its first $n$ samples. For each sample size $n$, the {\em empirical distribution} is defined as
\begin{align}
  F^{\text{emp}}_{\mathbf{X}^n}(x) = \frac{1}{n}\sum_{i=1}^n\mathbf{1}_{X_i\le x},\quad x\in\mathbb{R},
\end{align}
and the {\em empirical quantile} is defined as
\begin{align}
  G^{\text{emp}}_{\mathbf{X}^n}(z) :=(F^{\text{emp}}_{\mathbf{X}^n})^{-1}(z) = X_{(\lfloor nz\rfloor +1)},\quad z\in (0,1),
\end{align}
where $\lfloor a\rfloor$ denotes the integer part of $a$ and $X_{(1)} \leq \cdots \leq X_{(n)}$. In practice, the quantity $\re(\mathbf{X}^n) := \rho(G^{\text{emp}}_{\mathbf{X}^n}(\cdot))$ is computed as the estimated risk measure $\rho(G(\cdot))$ given the $n$ samples $\mathbf{X}^n$. The risk estimator $\re$ is defined on $\cup_{n\geq1} \bR^n$, the set of all possible sequences of samples $(\mathbf{X}^n)_{n\geq1}$, and has values in $\bR^+$. Since the samples can be regarded as random variables, so is the risk estimator $\re(\mathbf{X}^n)$. We denote by $\cL_n(\re,G)$ the distribution function of $\re(\mathbf{X}^n)$.

$\rho$ is said to be {\em consistent} at $G=F^{-1}\in\dom(\rho)$ if
\begin{align}
  \lim_{n\rightarrow \infty} \re(\mathbf{X}^n) = \rho(F^{\text{emp}}_{\mathbf{X}^n})^{-1}) \qquad \text{almost surely.}
\end{align}
Because the true risk measure $\rho(G(\cdot))$ is estimated by $\re(\mathbf{X}^n)$, the consistency is the minimal requirement for a meaningful risk measure. In the following we denote by $\quan^\rho$ the set of quantiles $G$ at which $\rho$ is well defined and consistent, and $\quan^\rho_c$ the continuous quantiles in $\quan^\rho$.

The following definition of {\it robust risk estimator} is
considered in \cite[Definition 4]{contdeguestscandolo}:
\begin{definition}[\citealt{contdeguestscandolo}]
Let $\rho$ be defined by \eqref{eq:rhofunction} and
$\cC\subset\quan^\rho$ be a  a set of plausible P\&L
quantiles. $\re$ is {\em $\cC$-robust} at $G\in \cC$ if for any
$\varepsilon>0$ there exist $\delta>0$ and $n_0\ge 1$ such that, for
all $\widetilde G\in \cC$,
\begin{align*}
d(\widetilde G,G)\le \delta \Longrightarrow d_P(\cL_n(\re,\widetilde G),\cL_n(\re,G))<\varepsilon, \qquad \forall n \geq n_0.
\end{align*}
A risk estimator $\re$ is called {\em $\cC$-robust} if it is
$\cC$-robust at any $G\in\cC$.
\end{definition}
We can see that the definition does not rely on the choice of the
metric on the topological space $\quan$. The choice of the
L\'evy-Prokhorov distance on $\cdf$, however, is critical. As
pointed out by \cite{huber1981rs}, the use of a different metric,
even if it also metrizes the weak topology, may lead to a different
class of robust estimators. This metric   is a natural choice in
robust statistics \citep{huber1981rs}. Alternative formulations are
proposed by \cite{kratschmer2011qualitative}.

The following proposition, taken from \cite{contdeguestscandolo},
shows that the robustness of the risk estimator $\re$ is equivalent
to the continuity of the risk measure $\rho$ on $\quan$ under the
weak topology.
\begin{proposition}[\citealt{contdeguestscandolo}]\label{prop:robustness}
  Let $\rho$ be a risk measure and $G\in \cC\subset \quan^\rho$. The following are equivalent:
  \begin{enumerate}
    \item $\rho$, when restricted to $\cC$, is continuous at $G$;
    \item $\re$ is $\cC$-robust at $G$.
  \end{enumerate}
\end{proposition}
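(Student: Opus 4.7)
The plan is to prove both implications via the classical Hampel-style coupling argument used for qualitative robustness, adapted here to quantile-based risk measures rather than distribution functionals. The representation of $\rho$ as a functional on $\quan$ (rather than on $\cdf$) is convenient, but the topology is the same weak topology, so the Glivenko-Cantelli theorem continues to supply the needed empirical-measure convergence in probability.

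For the implication (1)$\Rightarrow$(2), I would fix $\varepsilon>0$ and $G\in\cC$ and argue as follows. By continuity of $\rho|_\cC$ at $G$, choose $\delta_1>0$ such that every $H\in\cC$ with $d(H,G)<\delta_1$ satisfies $|\rho(H)-\rho(G)|<\varepsilon/3$. Next, apply a Glivenko-Cantelli-type statement in the L\'evy-Prokhorov (equivalently weak) topology: for any fixed distribution, the empirical quantile $G^{\text{emp}}_{\mathbf{X}^n}$ converges almost surely to the true quantile, so I can choose $n_0$ large enough that both $\bP_{G}(d(G^{\text{emp}}_{\mathbf{X}^n},G)<\delta_1/2)>1-\varepsilon/3$ and the analogous statement under $\widetilde G$ hold uniformly when $d(\widetilde G,G)\leq\delta_1/2$. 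A triangle inequality then shows that for such $\widetilde G$ the empirical quantiles are within $\delta_1$ of $G$ with probability at least $1-2\varepsilon/3$, and the continuity bound converts this to a $\varepsilon$-closeness of $\re(\mathbf{X}^n)$ and $\re(\widetilde{\mathbf{X}}^n)$ to $\rho(G)$ in probability. Taking $\delta:=\delta_1/2$ and translating this into the L\'evy-Prokhorov metric on laws of real random variables yields the robustness inequality.

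For the implication (2)$\Rightarrow$(1), I would argue by contrapositive. Suppose $\rho|_\cC$ is not continuous at $G$; then there exist $\eta>0$ and a sequence $(G_m)\subset\cC$ with $d(G_m,G)\to0$ but $|\rho(G_m)-\rho(G)|\geq\eta$ for all $m$. Since $G,G_m\in\quan^\rho$, consistency ensures that $\re(\mathbf{X}^n)\to\rho(G)$ a.s.\ when $\mathbf{X}^n$ is drawn from $G^{-1}$, and $\re(\mathbf{Y}^n_m)\to\rho(G_m)$ a.s.\ when $\mathbf{Y}^n_m$ is drawn from $G_m^{-1}$. Hence $\cL_n(\re,G)$ and $\cL_n(\re,G_m)$ concentrate at points at least $\eta$ apart, so for each $m$ there is $n_m$ with $d_P(\cL_n(\re,G),\cL_n(\re,G_m))\geq\eta/3$ for all $n\geq n_m$. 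Since $d(G_m,G)\to0$, this directly contradicts the robustness condition with $\varepsilon:=\eta/3$.

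The main obstacle is the first direction, specifically ensuring that the Glivenko-Cantelli convergence is uniform enough over $\widetilde G$ ranging in a weak-topology neighborhood of $G$ so that the same $n_0$ works for all of them. The standard way around this is to couple samples from $G$ and $\widetilde G$ via their quantile transforms of a common uniform sample, exploiting that $|G^{-1}(U_i)-\widetilde G^{-1}(U_i)|$ is uniformly small on the bulk of $(0,1)$ when $d(\widetilde G,G)$ is small and $G$ is continuous; this coupling transfers closeness of the population quantiles into almost-sure closeness of the empirical quantiles, after which the continuity of $\rho$ takes over. Beyond this coupling step, the remaining work is bookkeeping with the L\'evy-Prokhorov metric, and the second direction is essentially a soft consequence of consistency at $G$ and at the perturbations $G_m$.
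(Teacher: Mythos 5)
A preliminary remark: the paper does not actually prove this proposition --- it is imported verbatim from \citet{contdeguestscandolo} --- so your argument can only be assessed on its own terms. Your direction (2)$\Rightarrow$(1) is fine: it uses only consistency at $G$ and at the perturbations $G_m$, which is exactly what the hypothesis $\cC\subset\quan^\rho$ supplies, plus the elementary fact that laws concentrating at values $\eta$ apart are at L\'evy distance at least $\min(\eta,1)$ from each other in the limit.

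The gap is in (1)$\Rightarrow$(2). Continuity is assumed only for $\rho$ \emph{restricted to} $\cC$, but your key step applies it to the empirical quantile functions: from $d(G^{\text{emp}}_{\mathbf{X}^n},G)<\delta_1$ you conclude $|\re(\mathbf{X}^n)-\rho(G)|<\varepsilon/3$. Empirical quantiles are step functions and in general do not belong to $\cC$; in the applications of this proposition in the paper, $\cC$ is contained in the set of \emph{continuous} quantile functions, so no empirical quantile ever lies in $\cC$ and the hypothesis gives no control at all on $\rho(G^{\text{emp}}_{\mathbf{X}^n})$. This is not a technicality one can wave away: the whole point of the restricted notion of continuity is that risk measures such as expected shortfall or the convex loss-based measures are continuous on nice classes while being applied to empirical (discrete) data, where unrestricted weak continuity fails. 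The classical Hampel coupling you invoke requires continuity of the functional at $G$ on a class containing the empirical laws; the coupling through a common uniform sample that you propose only addresses the uniformity of the Glivenko--Cantelli step (which on $\mathbb{R}$ is in any case available distribution-free, via the Dvoretzky--Kiefer--Wolfowitz inequality and the domination of the L\'evy metric by the Kolmogorov metric), not this problem. With the stated hypotheses the argument has to route through the deterministic values instead, e.g.\ via $d_P(\cL_n(\re,\widetilde G),\cL_n(\re,G))\le d_P(\cL_n(\re,\widetilde G),\delta_{\rho(\widetilde G)})+\min\{|\rho(\widetilde G)-\rho(G)|,1\}+d_P(\delta_{\rho(G)},\cL_n(\re,G))$, using consistency (membership in $\quan^\rho$) for the two outer terms and continuity of $\rho|_\cC$ only for the middle one; the genuinely delicate point is then to make the consistency step uniform in $\widetilde G$ over the $\delta$-ball so that a single $n_0$ works, and your write-up does not address that either.
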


In the following, we are going to investigate the continuity of $\rho$, which finally clarifies whether $\re$ is robust or not. The following lemma is useful.
\begin{lemma}\label{le:uniformconvergence}
  Let $-\infty<a<b<+\infty$, and $G_n,G$ be increasing functions on $[a,b]$. Suppose $G(z)$ is continuous on $[a,b]$ and $G_n(z)\rightarrow G(z)$ for each $z\in [a,b]$, then $G_n(z)\rightarrow G(z)$ uniformly on $[a,b]$.
\end{lemma}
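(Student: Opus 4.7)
The plan is to use the classical trick that pointwise convergence of monotone functions on a finite partition, combined with uniform continuity of the limit, upgrades to uniform convergence. This is essentially a version of Pólya's theorem.

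First, I would invoke the fact that the continuous function $G$ on the compact interval $[a,b]$ is uniformly continuous. Thus, given $\varepsilon > 0$, I can choose a finite partition $a = z_0 < z_1 < \cdots < z_k = b$ fine enough so that $G(z_i) - G(z_{i-1}) < \varepsilon$ for every $i = 1,\dots,k$. This is the step that uses the continuity of the limit in an essential way.

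Next, because there are only finitely many partition points, pointwise convergence $G_n(z_i) \to G(z_i)$ at each $z_i$ yields a single integer $N$ (the max of the finitely many indices) such that $|G_n(z_i) - G(z_i)| < \varepsilon$ for every $n \geq N$ and every $i = 0,1,\dots,k$. Then, for any $z \in [a,b]$, I locate the subinterval $[z_{i-1}, z_i]$ containing $z$ and use monotonicity of $G_n$ to sandwich
\[
G_n(z_{i-1}) \le G_n(z) \le G_n(z_i),
\]
which after subtracting $G(z)$ and using monotonicity of $G$ yields
\[
G_n(z_{i-1}) - G(z_i) \le G_n(z) - G(z) \le G_n(z_i) - G(z_{i-1}).
\]
Both outer quantities are then bounded in absolute value by $2\varepsilon$ via the partition estimate and the pointwise approximation at the $z_i$.

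The argument is essentially routine; the only subtle point (hardly an obstacle) is to make sure that monotonicity of both $G_n$ and $G$ is used correctly to pass from the finite set of partition points to all $z \in [a,b]$. Since the bound $2\varepsilon$ is uniform in $z$ for $n \geq N$, the conclusion $\sup_{z \in [a,b]} |G_n(z) - G(z)| \to 0$ follows at once, completing the proof.
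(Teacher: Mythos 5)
Your proof is correct and follows essentially the same route as the paper's: uniform continuity of $G$ gives a finite partition with small increments, pointwise convergence handles the finitely many partition points, and monotonicity of $G_n$ and $G$ sandwiches $G_n(z)-G(z)$ between quantities of size at most $2\varepsilon$. No gaps; the argument matches the paper's proof up to the cosmetic choice of $\varepsilon$ versus $\varepsilon/2$.
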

\begin{proof}
    Because $[a,b]$ is a closed interval and $G(z)$ is continuous on $[a,b]$, for each $\epsilon>0$, there exists $a=z_1<\cdots<z_m=b$ such that $\sup_{1\le i\le m-1}|G(z_{i+1})-G(z_i)|<\frac{\epsilon}{2}$. On the other hand, there exists $N$ such that when $n\ge N$, $|G_n(z_i)-G(z_i)|<\frac{\epsilon}{2}$ for every $z_i,i=1,\dots m$. Now, for any $z\in [a,b]$, there exist $z_i,z_{i+1}$ such that $z_i\le z\le z_{i+1}$. Thus,
  \begin{align*}
    G_n(z)-G(z)&\le G_n(z_{i+1})-G(z_i)\\
    & = G_n(z_{i+1})-G(z_{i+1})+G(z_{i+1})-G(z_i)\\
    &< \epsilon
  \end{align*}
  when $n\ge N$. Similarly, we have $G_n(z)-G(z)>-\epsilon$ when $n\ge N$. Therefore, $G_n(z)\rightarrow G(z)$ on $[a,b]$ uniformly.
\end{proof}

The following lemma shows that any risk measure $\rho$ in \eqref{eq:rhofunction} is consistent on $\quan_c^\infty$, i.e., $\quan_c^\infty\subset \quan^\rho_c$.
\begin{lemma}\label{le:consistency}
  Let $\rho$ be given in \eqref{eq:rhofunction}. Then $\rho$ is consistent at any $G\in \quan_c$ that is bounded from below. In particular, $\quan_c^\infty \subset\quan^\rho_c$.
\end{lemma}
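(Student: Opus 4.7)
The plan is to reduce the consistency claim at $G$ to a single uniform convergence statement for the clipped empirical quantiles and then use the penalty-function representation. Writing $H:=G\wedge 0$ and $H_n:=G^{\text{emp}}_{\mathbf{X}^n}\wedge 0$, the dual formula \eqref{eq:rhofunction} together with the elementary inequality $|\inf_m a_m - \inf_m b_m|\leq \sup_m |a_m-b_m|$ yields
\begin{align*}
|\rho(G^{\text{emp}}_{\mathbf{X}^n})-\rho(G)| &\leq \sup_{m\in \dom(v)}\left|\int_{(0,1)}(H_n-H)\, dm\right| \\
&\leq \|H_n-H\|_\infty \sup_{m\in\cM((0,1))}m((0,1)) \\
&\leq \|H_n-H\|_\infty,
\end{align*}
using the defining constraint $m((0,1))\le 1$ for $m\in\cM((0,1))$. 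The whole proof therefore reduces to showing $\|H_n-H\|_\infty\to 0$ almost surely, with the sup norm taken over $(0,1)$.

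Next I would exploit the hypothesis that $G\ge -M$ for some finite $M>0$: the law of each $X_i$ is then supported in $[-M,\infty)$, so $X_i\ge -M$ almost surely, and both $H$ and $H_n$ take values in $[-M,0]$. The classical Glivenko--Cantelli theorem gives $F^{\text{emp}}_{\mathbf{X}^n}\to F$ uniformly on $\bR$ on a single event of full probability, which combined with the continuity of $G=F^{-1}$ on $(0,1)$ yields $G^{\text{emp}}_{\mathbf{X}^n}(z)\to G(z)$ for every $z\in(0,1)$ almost surely. Since both sequences are nondecreasing and $H$ is continuous on $(0,1)$, Lemma~\ref{le:uniformconvergence} already delivers uniform convergence on every compact subinterval $[\varepsilon,1-\varepsilon]\subset (0,1)$.

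The hard part is upgrading this to uniform convergence on the full open interval $(0,1)$: a penalty-admissible measure $m\in\dom(v)$ may place mass arbitrarily close to $0$ or $1$, so the reduction above genuinely requires $\|H_n-H\|_\infty\to 0$ on all of $(0,1)$, not merely on compact subsets. I would close this gap by extending $H$ continuously to $[0,1]$ via $H(0):=G(0^+)\wedge 0\in [-M,0]$ and $H(1):=G(1^-)\wedge 0$ (equal to $0$ when $G(1^-)=+\infty$, otherwise finite), the monotone limits being well defined; similarly extend $H_n$ by $H_n(0):=X_{(1)}\wedge 0$ and $H_n(1):=X_{(n)}\wedge 0$, keeping it nondecreasing. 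Convergence at the endpoints is a standard consequence of the strong law for extreme order statistics in the i.i.d.\ case: $X_{(1)}$ converges almost surely to $\text{essinf}(X)=G(0^+)$, and $X_{(n)}$ to $\text{esssup}(X)=G(1^-)$ (the unbounded case being harmless because $X_{(n)}\wedge 0$ then becomes $0$ eventually). With pointwise convergence now established on all of $[0,1]$, another application of Lemma~\ref{le:uniformconvergence} delivers $\|H_n-H\|_{\infty,[0,1]}\to 0$ almost surely, which dominates the sup norm over $(0,1)$ and completes the argument.
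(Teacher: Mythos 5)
Your proof is correct, but it runs on a different engine than the paper's. You first extract from the dual formula \eqref{eq:rhofunction} a global Lipschitz estimate, $|\rho(G^{\text{emp}}_{\mathbf{X}^n})-\rho(G)|\le \sup_{m\in\dom(v)}|\int (H_n-H)\,dm|\le \|H_n-H\|_\infty$ (legitimate here because both infima are finite, $\dom(v)\neq\emptyset$ by \eqref{eq:penaltyfunction}, and every $m$ has total mass at most one), and then prove uniform convergence of the clipped empirical quantiles on all of $[0,1]$: Glivenko--Cantelli plus continuity of $G$ for the interior, the minimum order statistic at $0$ (as in the paper), and -- this is your new ingredient -- the maximum order statistic at $1$, exploiting that $H=G\wedge 0$ is monotone and bounded above by $0$, hence extends continuously to $z=1$ even when $G(1^-)=+\infty$, so that Lemma~\ref{le:uniformconvergence} applies on the closed interval. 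The paper instead never establishes uniform convergence near $z=1$: it splits the limit into an upper bound via Fatou's lemma ($\limsup_n U(G_n)\le U(G)$ for $U=-\rho$) and a lower bound via uniform convergence on $(0,\eta]$ only, followed by an $\eta\uparrow 1$ truncation argument $U(G(\cdot\wedge\eta))\to U(G(\cdot))$ that uses monotonicity of $G\wedge 0$ and the fact that it is bounded above by $0$. The trade-off: your route is more elementary (no Fatou, no truncation parameter), yields a quantitative sup-norm stability bound for $\rho$ as a by-product, and handles the right endpoint once and for all via extreme order statistics; the paper's route avoids any appeal to the behaviour of the sample maximum and, more importantly, its limsup/liminf-with-truncation template is reused almost verbatim in the proof of Theorem~\ref{th:weak_continuity}, where your sup-norm argument would not carry over (there one only has weak convergence of quantiles, not convergence at the endpoints). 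Both proofs should state that the various almost-sure convergences are combined on a single full-probability event, which you, like the paper, leave implicit.
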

\begin{proof}
    Let $G\in \quan_c$ that is bounded from below, and $X_1,X_2,\dots$ be its samples. By Glivenko-Cantelli theorem, $G^{\text{emp}}_{\mathbf{X}^n}(z)\rightarrow G(z),0<z<1$ almost surely. Furthermore, $\inf_{i=1,\dots,n}X_i\rightarrow \text{essinf} X_1$ almost surely, which shows that $G^{\text{emp}}_{\mathbf{X}^n}(0+)\rightarrow G(0+)$. Thus, if we extend $G^{\text{emp}}_{\mathbf{X}^n}$ and $G$ from $(0,1)$ to $[0,1)$ by setting $G^{\text{emp}}_{\mathbf{X}^n}(0):=G^{\text{emp}}_{\mathbf{X}^n}(0+)$ and $G(0):= G(0+)$, then $G(\cdot)$ is continuous on $[0,1)$ and $G^{\text{emp}}_{\mathbf{X}^n}(z)\rightarrow G(z),0\le z<1$ almost surely.

    In the following, for each fixed $\omega$, let $G_n:=G^{\text{emp}}_{\mathbf{X}^n}$. For simplicity, we work with $U(\cdot) := -\rho(\cdot)$. We want to show that $U(G_n(\cdot))\rightarrow U(G(\cdot))$.  On the one hand,
    \begin{align*}
    \limsup_{n\rightarrow \infty}U(G_n(\cdot))&\le \inf_{m\in \dom(v)}\left[\limsup_{n\rightarrow \infty}\int_{(0,1)}(G_n\wedge0)(z)m(dz) + v(m)\right]\\
    &\le\inf_{m\in \dom(v)}\left[\int_{(0,1)}(G(z)\wedge0)m(dz) + v(m)\right]\\
    & = U(G(\cdot)),
    \end{align*}
  where the second inequality is due to Fatou's lemma. On the other hand, for each $\eta<1$, by Lemma \ref{le:uniformconvergence}, $G_n(z)\rightarrow G(z)$ uniformly for $z\in(0,\eta]$. Thus, we have
  \begin{align*}
    \liminf_{n\rightarrow \infty}U(G_n(\cdot))& =  \liminf_{n\rightarrow \infty}\inf_{m\in \dom(v)}\left[\int_{(0,1)}(G_n(z)\wedge 0)m(dz) + v(m)\right]\\
    & \geq \liminf_{n\rightarrow \infty}\inf_{m\in \dom(v)}\left[\int_{(0,1)}(G_n(z\wedge \eta)\wedge 0)m(dz) + v(m)\right]\\
     & = \inf_{m\in \dom(v)}\left[\int_{(0,1)}(G(z\wedge \eta)\wedge 0)m(dz) + v(m)\right]\\
     & = U(G(\cdot\wedge\eta)),
  \end{align*}
  where the second equality holds because $G_n(z\wedge \eta)\rightarrow G(z\wedge \eta)$ for $z\in(0,1)$ uniformly. Finally,
  \begin{align*}
    U(G(\cdot))\ge U(G(\cdot\wedge\eta)) &= \inf_{m\in \dom(v)}\left[\int_{(0,1)}(G(z\wedge \eta )\wedge 0)m(dz) + v(m)\right]\\
    &=\inf_{m\in \dom(v)}\Big[\int_{(0,1)}(G(z )\wedge 0)m(dz)+ v(m)\\
    &\qquad  -\int_{(0,1)} (G(z)\wedge 0-(G(z\wedge \eta )\wedge 0))m(dz)\Big]\\
    &\ge U(G(\cdot)) - \left[\lim_{z\uparrow 1}G(z)\wedge 0 - G(\eta)\wedge 0\right]\\
    &\rightarrow U(G(\cdot))
  \end{align*}
  as $\eta \uparrow 1$. Therefore, we conclude $\liminf_{n\rightarrow \infty}U(G_n(\cdot))\ge U(G(\cdot))$.
\end{proof}

Lemma \ref{le:consistency} shows that any risk measure $\rho$ defined in \eqref{eq:rhofunction} is consistent at least at bounded continuous quantile functions. For any particular example of the risk measures in \eqref{eq:rhofunction}, it is possible to show that it is consistent at certain unbounded quantile functions.\footnote{For instance, \citet[Example 2.11]{contdeguestscandolo} show that spectral risk measures are consistent at any quantile functions at which the risk measures are well-defined.} In this paper, we consider the general risk measure in \eqref{eq:rhofunction} and mainly focus on investigating the robustness of risk estimators. For this reason, we do not explore the consistency for any particular risk measure in detail.

The following result provides a sufficient and necessary condition under which the risk measures defined in \eqref{eq:rhofunction} are
continuous on some subset of $\quan$.
\begin{theorem}
\label{th:weak_continuity}
Let $\rho$ be defined by \eqref{eq:rhofunction} and $\cC$ be any subset of $\quan_c$ such that $\cC\supseteq\quan_c^\infty$. The following are equivalent:
\begin{itemize}
\item[(i)] $\rho(\cdot)$, when restricted to $\cC$, is continuous at any $G(\cdot) \in \cC$.
\item[(ii)] There exists $0<\delta<1$ such that
\ba
\sup_{m\in\dom(v)} m((0,\delta))=0.
\ea
\end{itemize}
Furthermore, if (ii) holds, $\quan_c^\rho = \quan_c$.
\end{theorem}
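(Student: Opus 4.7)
The plan is to prove both implications and then derive the last assertion from the argument for (ii)$\Rightarrow$(i). Throughout, write $f(m) := \int (G\wedge 0)\,dm + v(m)$ and $f_n(m) := \int (G_n\wedge 0)\,dm + v(m)$, so that $\rho(G) = -\inf_m f(m)$ and $\rho(G_n) = -\inf_m f_n(m)$.

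For (ii)$\Rightarrow$(i), I fix $G \in \cC$ and a sequence $G_n \in \cC$ with $G_n \to G$; since $G$ is continuous, $G_n(z) \to G(z)$ for every $z \in (0,1)$. Let $\delta$ be as in (ii), so every $m \in \dom(v)$ vanishes on $(0,\delta)$. For the bound $\limsup_n \inf_m f_n \le \inf_m f$, I fix $\varepsilon>0$ and pick $m^\varepsilon \in \dom(v)$ with $f(m^\varepsilon) \le \inf_m f + \varepsilon$; since $|G_n \wedge 0|$ is dominated by $|G_n(\delta)|$ on $[\delta,1)$, uniformly bounded for large $n$, dominated convergence yields $f_n(m^\varepsilon) \to f(m^\varepsilon)$. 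For the reverse bound $\liminf_n \inf_m f_n \ge \inf_m f$, I take near-minimizers $m_n$ with $f_n(m_n) \le \inf_m f_n + 1/n$, so $\inf_m f - \inf_m f_n - 1/n \le f(m_n) - f_n(m_n) = \int_{[\delta,1)} (G\wedge 0 - G_n\wedge 0)\,dm_n$, and I split this integral as $[\delta,1-\eta] \cup [1-\eta, 1)$. On $[\delta, 1-\eta]$, Lemma \ref{le:uniformconvergence} gives uniform convergence and an $o(1)$ contribution. On $[1-\eta, 1)$: if $G(1^-) > 0$, the pointwise limit $G_n(1-\eta) \to G(1-\eta) > 0$ forces $G_n \wedge 0 = 0 = G \wedge 0$ there for large $n$; if $G(1^-) \le 0$, monotonicity of $G$ and $G_n$ yields the upper bound $(G\wedge 0 - G_n\wedge 0)(z) \le G(1^-) - G_n(1-\eta)$, which tends to $G(1^-) - G(1-\eta)$ and is made arbitrarily small by choosing $\eta$ small using continuity of $G$ at $1^-$.

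For (i)$\Rightarrow$(ii), I argue by contraposition: suppose that $\sup_{m\in\dom(v)} m((0,\delta)) > 0$ for every $\delta > 0$. Take $G \equiv 0 \in \quan_c^\infty \subseteq \cC$, for which $\rho(G) = 0$ by \eqref{eq:penaltyfunction}. For each $n$, pick $\delta_n \downarrow 0$ and $m_n \in \dom(v)$ with $c_n := m_n((0,\delta_n)) > 0$, let $M_n := v(m_n) < \infty$, and set $K_n := (M_n + n)/c_n$. Define $G_n \in \cC$ to be continuous and increasing with $G_n = -K_n$ on $(0,\delta_n]$, linear on $[\delta_n, 2\delta_n]$, and $G_n = 0$ on $[2\delta_n, 1)$. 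Pointwise $G_n(z) \to 0$ for each $z > 0$, so $G_n \to G$ in $\quan$; however, since $G_n \le 0$, $\int G_n\wedge 0\,dm_n \le -K_n c_n$, whence $\rho(G_n) \ge -\int G_n\wedge 0\,dm_n - v(m_n) \ge K_n c_n - M_n = n \to \infty$, contradicting continuity at $G$.

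For the final assertion $\quan_c^\rho = \quan_c$ under (ii), I extend Lemma \ref{le:consistency} to unbounded $G \in \quan_c$: since every $m \in \dom(v)$ is supported on $[\delta,1)$ where the $\lfloor n\delta\rfloor+1$-th order statistic converges almost surely to the finite value $G(\delta)$, the empirical quantile $G^{\text{emp}}_{\mathbf{X}^n}$ is almost surely bounded on $[\delta,1)$ for large $n$, and the Fatou and monotone-convergence steps from Lemma \ref{le:consistency} go through verbatim on $[\delta,1)$. The main obstacle is the $\liminf$ step in (ii)$\Rightarrow$(i), where the near-minimizers $m_n$ depend on $n$ and may concentrate near $1$; it is overcome by exploiting the monotonicity of $G, G_n$ (so their behaviour on $[1-\eta,1)$ is controlled by endpoint values) together with the uniform mass constraint $m_n((0,1)) \le 1$.
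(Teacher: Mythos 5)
Your proposal is correct and follows essentially the same route as the paper: the (i)$\Rightarrow$(ii) direction uses the identical spike construction near $0$, the final claim $\quan_c^\rho=\quan_c$ is reduced to Lemma \ref{le:consistency} exactly as in the text, and the (ii)$\Rightarrow$(i) direction rests on the same ingredients (condition (ii) confining all measures in $\dom(v)$ to $[\delta,1)$, Lemma \ref{le:uniformconvergence} on compact subintervals, monotonicity of quantiles to control the tail near $1$, and the mass bound $m((0,1))\le 1$). The only differences are bookkeeping --- you extract near-minimizers $m_n$ and split the integration domain at $1-\eta$, and use dominated convergence at an $\varepsilon$-optimal $m$ for the other semicontinuity, where the paper instead truncates the quantile to $G(\cdot\wedge\eta)$ inside the infimum and uses a reverse-Fatou bound --- plus one harmless slip: in the case $G(1^-)\le 0$ the pointwise tail bound should be $G(1^-)-\bigl(G_n(1-\eta)\wedge 0\bigr)$ rather than $G(1^-)-G_n(1-\eta)$, which changes nothing in the limit.
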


\begin{proof}
\begin{itemize}
\item[$(ii)\Rightarrow(i)$] The proof is analogous to the proof of Lemma \ref{le:consistency}.
  Let (ii) hold for some $0<\delta<1$. For simplicity, we work with $U(.) \triangleq -\rho(.)$ and first show that $U$ is lower-semi-continuous. For each $\eta<1$, we have
  \begin{align*}
    \liminf_{n\rightarrow \infty}U(G_n(\cdot))& =  \liminf_{n\rightarrow \infty}\inf_{m\in \dom(v)}\left[\int_{[\delta,1)}(G_n(z)\wedge 0)m(dz) + v(m)\right]\\
    & \geq \liminf_{n\rightarrow \infty}\inf_{m\in \dom(v)}\left[\int_{[\delta,1)}(G_n(z\wedge \eta)\wedge 0)m(dz) + v(m)\right]\\
     & = \inf_{m\in \dom(v)}\left[\int_{[\delta,1)}(G(z\wedge \eta)\wedge 0)m(dz) + v(m)\right]\\
     & = U(G(\cdot\wedge\eta)),
  \end{align*}
  where the second equality holds because $G_n(z\wedge \eta)$ converges to $G(z\wedge\eta)$ uniformly for $z\in[\delta,1)$ where Lemma \ref{le:uniformconvergence} applies. Then, by monotonicity, we have $0\le U(G(\cdot))-U(G(\cdot\wedge \eta))$. Finally,
  \begin{align*}
    U(G(\cdot))\ge U(G(\cdot\wedge\eta)) &= \inf_{m\in \dom(v)}\left[\int_{[\delta ,1)}(G(z\wedge \eta )\wedge 0)m(dz) + v(m)\right]\\
    &=\inf_{m\in \dom(v)}\Big[\int_{[\delta,1)}(G(z )\wedge 0)m(dz)+ v(m)\\
    &\qquad  -\int_{(0,1)} (G(z)\wedge 0-(G(z\wedge \eta )\wedge 0))m(dz)\Big]\\
    &\le U(G(\cdot)) - \left[\lim_{z\uparrow 1}G(z)\wedge 0 - G(\eta)\wedge 0\right]\\
    &\rightarrow U(G(\cdot))
  \end{align*}
  as $\eta \uparrow 1$.
  Thus, we have $ \liminf_{n\rightarrow \infty}U(G_n(\cdot))\ge U(G(\cdot))$, and can conclude that $U$ is lower-semi-continuous.

Next, we show that $U$ is also upper-semi-continuous.
  \begin{align*}
    \limsup_{n\rightarrow \infty}U(G_n(\cdot))&\le \inf_{m\in \dom(v)}\left[\limsup_{n\rightarrow \infty}\int_{(0,1)}(G_n\wedge0)(z)m(dz) + v(m)\right]\\
    &\le\inf_{m\in \dom(v)}\left[\int_{(0,1)}(G(z)\wedge0)m(dz) + v(m)\right]\\
    & = U(G(\cdot)),
  \end{align*}
  where the second inequality is due to Fatou's lemma. Together with the lower-semi-continuity, $U(\cdot)$ is continuous at $G$, and so is $\rho$.

  \item[$(i)\Rightarrow(ii)$] We prove it by contradiction. If (ii) is not true, there exists $\delta_n>0$, and $m_n\in\dom(v)$, such that $\delta_n\rightarrow 0$ and $m_n((0,\delta_n))>0$. Define
  \begin{align*}
    G_n(z):=\begin{cases}
      -\beta_n & 0<z\le\delta_n,\\
      \frac{\beta_n}{\delta_n}(z-2\delta_n)& \delta_n< z\le 2\delta_n\\
      0 & 2\delta_n<z<1,
    \end{cases}
    \quad n\ge 1,
  \end{align*}
  where
  \begin{align*}
    \beta_n:=\frac{v(m_n)+1}{m_n((0,\delta_n))},\quad n\ge 1.
  \end{align*}
  It is obvious that $G_n\in\quan_c^\infty\subset \cC$, $n\ge 1$.
  Because $\delta_n\downarrow 0$, $G_n(\cdot)\rightarrow  0$ in $\cC$. On the other hand,
  \begin{align*}
    U(G_n(\cdot))\le \int_{(0,1)}G_n(z)m_n(dz) + v(m_n)\le -1,
  \end{align*}
  and $U(0)=0$. Thus, $U(\cdot)$ is not continuous at $0$ which is a contradiction.
  \end{itemize}
Finally, if (ii) holds, for any $G(\cdot)\in \quan_c$, we have $\rho(G(\cdot)) = \rho(G(\cdot\vee \delta))$. Because $G(\cdot\vee \delta)$ is bounded from below, by Lemma \ref{le:consistency}, $\rho$ is consistent at $G(\cdot\vee \delta)$, and therefore consistent at $G(\cdot)$. In other words, $\quan_c^\rho = \quan_c$.
\end{proof}

Theorem \ref{th:weak_continuity} provides a sufficient and necessary condition for the risk measures in \eqref{eq:rhofunction} to be continuous under the weak topology on $\quan$. The reason for choosing the weak topology is because the resulting continuity result is directly connected to the robustness of the risk estimators according to Proposition \ref{prop:robustness}. In \cite{jouini2006law}, the continuity of statistical convex risk measures on $\quan$ under a different topology, named as {\em Lebesgue property}, is investigated. The authors find an equivalent characterization of the Lebesgue property. To compare the sufficient and necessary condition we derive in Theorem \ref{th:weak_continuity} and theirs, we extend the result in \cite{jouini2006law} to the case of the general risk measures in \eqref{eq:rhofunction} by mimicking the proof in that paper. Because the comparison is not the main theme of the paper, we place the details in Appendix \ref{se:lebesgue}.

Finally, we combine Proposition \ref{prop:robustness} and Theorem \ref{th:weak_continuity} to obtain a sufficient and necessary condition for the robustness of the risk estimators associated with the risk measures \eqref{eq:rhofunction}.
\begin{corollary}
\label{co:robustness}
Let $\rho$ be defined by \eqref{eq:rhofunction} and $\cC$ be any subset of $\quan_c^\rho$ such that $\cC\supseteq\quan_c^\infty$. Then the following are equivalent
\begin{enumerate}
\item $\re$ is $\cC$-robust
\item There exists $0<\delta<1$ such that
\ba
\sup_{m\in\dom(v)} m((0,\delta))=0.
\ea
\end{enumerate}
Furthermore, if $\re$ is $\cC$-robust, then $\quan_c^\rho = \quan_c$.
\end{corollary}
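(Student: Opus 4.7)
The plan is to derive the corollary as a direct combination of the two preceding results; no genuinely new analysis is required. First I would unpack the definition: $\re$ being $\cC$-robust means, by definition, that $\re$ is $\cC$-robust at every $G\in\cC$. By Proposition \ref{prop:robustness} applied at each such $G$, this is equivalent to the statement that $\rho$, restricted to $\cC$, is continuous at every $G\in\cC$.

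Next, observe that the set $\cC$ in the corollary satisfies the hypotheses of Theorem \ref{th:weak_continuity}: by the definition of $\quan_c^\rho$ we have $\cC\subseteq\quan_c^\rho\subseteq\quan_c$, and by assumption $\cC\supseteq\quan_c^\infty$. Applying the equivalence $(i)\Leftrightarrow(ii)$ of Theorem \ref{th:weak_continuity}, continuity of $\rho\big|_\cC$ at every point of $\cC$ is equivalent to the existence of some $0<\delta<1$ with $\sup_{m\in\dom(v)}m((0,\delta))=0$. Chaining the two equivalences immediately yields $(1)\Leftrightarrow(2)$. For the final ``furthermore'' statement: if $\re$ is $\cC$-robust, the chain just established shows that condition (ii) of Theorem \ref{th:weak_continuity} holds, so the furthermore clause of that theorem produces $\quan_c^\rho=\quan_c$.

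There is essentially no obstacle here, since the substantive work has been done in the proofs of Proposition \ref{prop:robustness} and Theorem \ref{th:weak_continuity}; the corollary is just assembly. The only minor bookkeeping point is to check that $\cC\subseteq\quan_c$ so that Theorem \ref{th:weak_continuity} is applicable, and this is handled by the definitional inclusion $\quan_c^\rho\subseteq\quan_c$ noted above.
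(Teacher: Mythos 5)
Your proposal is correct and matches the paper's own (implicit) argument: the corollary is stated there precisely as the combination of Proposition \ref{prop:robustness} and Theorem \ref{th:weak_continuity}, with the inclusion $\cC\subseteq\quan_c^\rho\subseteq\quan_c$ making the theorem applicable and its ``furthermore'' clause giving $\quan_c^\rho=\quan_c$. Nothing is missing.
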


An immediate consequence of Corollary \ref{co:robustness} is that statistical convex loss-based risk measures do not lead to robust risk estimators.
\begin{corollary}
  Let $\rho$ be a loss-based statistical risk measure and $\cC$ be any subset of $\quan_c^\rho$ such that $\cC\supseteq \quan_c^\infty$. Then, $\re$ is not $\cC$-robust.
\end{corollary}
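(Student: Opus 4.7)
The plan is to invoke Corollary \ref{co:robustness} and show that the characterization (2) fails for any statistical convex loss-based risk measure $\rho$: that is, I want to show $\sup_{m\in\dom(v)} m((0,\delta))>0$ for every $\delta\in(0,1)$. Since the preceding sentence identifies the risk measures in question as statistical convex loss-based risk measures, I will apply the representation Theorem \ref{th:representationlawinvariant} to $\rho$, obtaining a convex penalty $v:\density((0,1))\to[0,\infty]$ satisfying \eqref{eq:penaltyfunctionlawinvariant}, and identify each decreasing density $\phi\in\density((0,1))$ with the absolutely continuous measure $m_\phi(dz)=\phi(z)\,dz\in\cM((0,1))$ so that $\rho$ fits in the form \eqref{eq:rhofunction} with $\dom(v)\subset\{m_\phi:\phi \text{ decreasing}\}$.

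The key elementary step is a mass-redistribution inequality for decreasing densities: for every decreasing nonnegative $\phi$ on $(0,1)$ and every $\delta\in(0,1)$,
\begin{equation*}
\int_0^\delta \phi(z)\,dz\;\ge\;\delta\int_0^1 \phi(z)\,dz.
\end{equation*}
This follows because monotonicity of $\phi$ yields $\int_0^\delta\phi\ge\delta\phi(\delta)$ and $\int_\delta^1\phi\le(1-\delta)\phi(\delta)$, so $\int_0^\delta\phi\ge\tfrac{\delta}{1-\delta}\int_\delta^1\phi$, and adding $\int_0^\delta\phi$ to both sides and dividing gives the claim.

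Next I use condition \eqref{eq:penaltyfunctionlawinvariant}: for any $\epsilon\in(0,1)$, there exists $\phi_\epsilon\in\dom(v)$ with $\int_0^1\phi_\epsilon(z)\,dz\ge 1-\epsilon$. Combining with the inequality above, $m_{\phi_\epsilon}((0,\delta))\ge \delta(1-\epsilon)$, so $\sup_{m\in\dom(v)}m((0,\delta))\ge \delta(1-\epsilon)$ for every $\epsilon$. Letting $\epsilon\downarrow 0$ gives $\sup_{m\in\dom(v)}m((0,\delta))\ge \delta>0$. Hence condition (ii) of Corollary \ref{co:robustness} fails for every $\delta\in(0,1)$, so $\re$ is not $\cC$-robust.

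There is really no hard step here: all the work has been done in Theorem \ref{th:representationlawinvariant} and Corollary \ref{co:robustness}. The only substantive piece is recognizing that the constraint $\phi\in\density((0,1))$ (decreasing densities) together with the normalization condition \eqref{eq:penaltyfunctionlawinvariant} forces nonvanishing mass near $0$, which is exactly the quantitative mass lower bound derived above. The mildest subtlety is making sure the identification of $\density((0,1))$ inside $\cM((0,1))$ is compatible with \eqref{eq:rhofunction} and its condition \eqref{eq:penaltyfunction}, which is immediate since decreasing densities with total mass $\le 1$ define measures of total mass $\le 1$.
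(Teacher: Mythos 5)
Your proposal is correct and follows essentially the same route as the paper: invoke Theorem \ref{th:representationlawinvariant}, use condition \eqref{eq:penaltyfunctionlawinvariant} to produce a decreasing density $\phi\in\dom(v)$ with substantial total mass, observe that monotonicity forces $\int_0^\delta\phi(z)\,dz>0$ for every $\delta$, and conclude via Corollary \ref{co:robustness}. The only (immaterial) difference is that you derive the quantitative bound $\int_0^\delta\phi\ge\delta\int_0^1\phi$ and let $\epsilon\downarrow 0$, whereas the paper simply fixes $\epsilon=\tfrac12$ and uses the qualitative fact that a decreasing density with positive total mass has positive mass near $0$.
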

\begin{proof}
  By Theorem \ref{th:representationlawinvariant}, $\rho$ can be represented as \eqref{eq:representationlawinvariant} where the penalty function $v$ satisfies \eqref{eq:penaltyfunctionlawinvariant}. As a result, there exists a $\phi\in\Psi((0,1))\cap \dom (v)$ such that $\int_0^1\phi(z)dz\ge \frac{1}{2}$. Because $\phi(\cdot)$ is decreasing on (0,1), we must have $\int_0^\delta \phi(z)dz>0$ for any $\delta>0$. By Corollary \ref{co:robustness}, $\re$ is not $\cC$-robust.
\end{proof}

This result reveals a dilemma: one has choose between convexity property, which leads to a  reduction of risk under diversification, and the robustness of the risk estimator associated with this risk measure, which rendes it amenable to estimation and backtesting. In practice, when choosing a risk measure for a certain purpose, one has to decide which of the two properties is more important. For instance, if the risk measure is used to compute daily margin requirements in a clearing house, the robustness is the more important issue because a system generating unstable  margin requirements may lead to large margin calls even in absence of any significant market event. However, the convexity property might be more relevant if the risk measure is used as an allocation tool, rather than a risk management tool.

On the other hand, recalling that the VaR on losses can be identified as a special case of the risk measures in \eqref{eq:rhofunction} by letting $\dom(v)=\{\delta_{\alpha}\}$ and $v(\delta_{\alpha})=0$, Corollary \ref{co:robustness} shows that the 'histoical (loss-based) Value at Risk' i.e. the empirical loss quantile is a robust risk estimator.
 These results are similar in spirit to previous results by \cite{contdeguestscandolo}.

\subsection{Robustification of risk estimators}
\label{sec:robustification}
Corollary \ref{co:robustness} provides a sufficient and necessary condition for a loss-based risk measure $\rho$ represented by equation \eqref{eq:rhofunction} to be continuous and therefore to lead to a robust empirical risk estimator. In particular, convex loss-based statistical risk measures lead to non-robust risk estimators. In the following, we provide one way to robustify risk estimator computed from statistical convex loss-based risk measures. Fixing some $\delta\in(0,1)$, for any statistical convex loss-based risk measure $\rho$,\footnote{The discussion in this subsection can also be applied to the risk measures given by representation \eqref{eq:rhofunction}. The restriction to statistical convex loss-based risk measures is only made to illustrate further the conflict between the convexity property and the robustness.} consider its {\em $\delta$-truncation}
\ba
\label{eq:rhoc_phi_function}
\rhoc(G(\cdot)) := \rho(G(\cdot \vee \delta)),\quad G\in\quan.
\ea
The new risk measure $\rho_\delta$ leads to a new risk estimator, $\re_\delta$, by plugging historical quantile functions. In the following, we are going to show that $\re_\delta$ is robust. As a result, $\re_\delta$ can be regarded as robustification of $\re$, the risk estimator associated with $\rho$.

From the representation \eqref{eq:representationlawinvariant}, we can find the representation of $\rho_\delta$, which in turns shows that $\re_\delta$ is robust.
Define the map $\pi: \Psi \mapsto \cM((0,1))$ which associates to any density function $\phi$ in $\Psi$ a measure $m$ defined by
\baa
m(dz) :=
\left\{
\begin{array}{cc}
\phi(z)dz, & \delta<z<1, \\
\left(\int_{(0,\delta]} \phi(t)dt\right) \delta_{z}, & z=\delta, \\
0, & 0<z<\delta,
\end{array}
\right.
\eaa
where $\delta_{z}$ is the Dirac measure at $z$. The observation that $\pi$ is not a bijective map leads to an additional definition since the penalty function $v(m)$ for $m \in \pi(\Psi((0,1))):= \{m \mid m=\pi(\phi) \text{ for some } \phi \in \Psi((0,1)) \}$ cannot be derived uniquely from $v(\phi)$ where $m=\pi(\phi)$. Therefore, by denoting $\pi^{-1}(m) = \{\phi\in \Psi((0,1))\mid  \pi(\phi)=m \}$, we define for $m \in \pi(\Psi)$,
\baa
v_\delta(m) := \inf_{\phi  \in \Psi \cap \pi^{-1}(m)} v(\phi).
\eaa
From the representation \eqref{eq:representationlawinvariant}, we have
\ba
\rhoc(G) &=& -\inf_{\phi \in \Psi}\left\{(G(\delta) \wedge 0) \int_{(0,\delta]} \phi(z)dz + \int_{(\delta,1)} (G(z) \wedge 0) \phi(z)dz + v(\phi)\right\} \nonumber \\
&=& -\inf_{m \in \pi(\Psi)} \left\{ \int_{(0,1)} (G(z) \wedge 0) m(dz) + v_\delta(m)\right\}. \label{eq:rhocfunction}
\ea
Finally, from \eqref{eq:penaltyfunctionlawinvariant} it is easy to see that $v_\delta$ satisfies \eqref{eq:penaltyfunction}.

From the representation \eqref{eq:rhocfunction}, it is immediate to see that the $\delta$-truncation $\rho_\delta$ is no longer convex since measures $m \in \pi(\Psi((0,1)))$ have a point mass at $\delta$ and therefore do not admit a density on $(0,1)$. On the other hand, it is also straightforward to see that $\re_\delta$ is $\quan_c$-robust because each $m\in \pi(\Psi((0,1)))$ satisfies $m((0,\delta))=0$.

\begin{example}
  The $\delta$-truncation of  the spectral loss measure \eqref{eq:spectralriskmeasure} is given by
  \begin{align}
    \rho_\delta(G) = \int_\delta^1(G(z)\wedge 0)\phi(z)dz + G(\delta)\int_0^\delta \phi(z)dz,\quad G\in\quan.
  \end{align}
\end{example}

\begin{example}
  The $\delta$-truncation of the loss certainty equivalent  is given by
  \begin{align}\label{eq:CEriskmeasuretruncated}
    \rho_\delta(G(\cdot)) = u^{-1}\left(\int_0^1u(|G(t\vee \delta)\wedge 0|)dt\right),\quad G\in\quan.
  \end{align}
  In particular, the $\delta$-truncation of the $L^p$ loss certainty equivalent is given by
  \begin{align} \label{eq:Lpriskmeasuretruncated}
    \rho_\delta(G) = \left[\int_0^1|G(z\vee \delta)\wedge 0|^pdz\right]^{\frac{1}{p}},\quad G\in\quan,
  \end{align}
  and its representation is expressed as
  \begin{align}
    \rho_\delta(G) = -\inf_{m\in \pi(\density^q((0,1)))}\int_0^1(G(z)\wedge 0)m(dz),\quad G\in\quan.
  \end{align}
  The $\delta$-truncation of the entropic loss certainty equivalent is given by
\begin{align}\label{eq:entropyriskmeasuretruncated}
  \rho_\delta(G) = \frac{1}{\beta} \log \left(\int_{(0,1)} e^{-\beta \,G(z\vee \delta) \wedge 0}\, dz \right),\quad G\in \quan,
\end{align}
and its representation is expressed as
\ba
\rhoc(G) = -\inf_{m \in \pi(\Phi((0,1)))} \left\{ \int_{(0,1)} (G(z) \wedge 0) m(dz) + v_\delta(m)\right\},\quad G\in \quan
\ea
where for each $m\in\pi(\Phi((0,1)))$
\baa
v_\delta(m) &=& \inf_{\phi \in \Phi \cap \pi^{-1}(m)} v(\phi)\\
&=& \frac{1}{\beta} \, \int_{(\delta,1)} m'(z) \log\left(m'(z)\right) \, dz + \frac{1}{\beta} \, \lim_{z \downarrow \delta} \left( m'(z) \log\left(m'(z)\right) \, z\right) - \inf_{\phi\in\Phi((0,1))}\int_0^1\phi(z)\ln\phi(z)dz.
\eaa
  Here, $m'(z),\delta<z<1$ denotes the density of $m$ on $(\delta,1)$ w.r.t. to Lebesgue measure. This density is well-defined because $m \in\pi(\Phi((0,1)))$.
\end{example}

It is worth mentioning the paper \citet{contdeguestscandolo}, in which the authors also propose a way to robustify the risk estimator associated with the expected shortfall. The robustification suggested by those authors is different from ours. Indeed, \citet{contdeguestscandolo} propose to truncate the expected shortfall, denoted by $\textrm{ES}_\alpha$, in the following way
\baa
\text{ES}_{\delta,\alpha}(G) = \frac{1}{\alpha-\delta} \int_{(\delta,\alpha)} G(z) \, dz,
\eaa
and define the risk estimator associated with $\text{ES}_{\delta,\alpha}$ as the robustification. Applying their idea to the larger class of statistical convex loss-based risk measures would lead to defining the following truncation
\baa
\widetilde{\rhoc}(G) = -\inf_{\widetilde{\phi} \in \widetilde{\pi}(\Psi)} \, \left\{\int_{(0,1)} (G(z) \wedge 0) \widetilde{\phi}(z)dz + \widetilde{v_\delta}(\widetilde{\phi})\right\},
\eaa
where the map $\widetilde{\pi}: \Psi \rightarrow \cM((0,1))$ associates to any function $\phi \in \Psi$ another function $\widetilde{\phi} \in \cM((0,1))$ defined by
\baa
\widetilde{\phi}(z) := \frac{\phi(z) \mathbf{1}_{(\delta,1)}(z)}{\int_{(\delta,1)} \phi(z) dz} \not\in \Psi,
\eaa
and where the penalty function is given by
\baa
\widetilde{v}_\delta(\widetilde{\phi}) := \inf_{\phi \in \Psi \cap \widetilde{\pi}^{-1}(\widetilde{\phi})} v(\phi).
\eaa
Compared to the truncation $\rhoc$ considered in this paper, this new truncation $\widetilde{\rhoc}$ reassigns the probability weight attached to $z\in(0,\delta)$ evenly to $z\in(\delta,1)$. The new truncation $\widetilde{\rhoc}$ is less tractable than $\rhoc$ because it cannot be computed from the initial risk measure $\rho$ as in \eqref{eq:rhoc_phi_function}.

\section{Sensitivity Analysis of Risk Estimators}
\label{sec:sensitivity}
In the previous section, we have studied the robustness of risk estimators in a qualitative sense. One may argue that the above results rely on the choice of a topology (weak topology) together with a distance (L\'evy-Prokhorov distance) on the space of probability distributions. As illustrated  in Appendix \ref{se:lebesgue}, the choice of a weaker topology could lead to different continuity properties for the risk measures and thus to different robustness properties for the corresponding risk estimators. Nonetheless, as noted in \cite{huber1981rs}, the choice of the weak continuity to study robustness  is natural in statistics. To further illustrate this statement in this section, we study sensitivity properties of the risk estimators associated with loss certainty equivalents and their $\delta$-truncation versions without relying on any topology and show that the study leads to the same conclusions as before, i.e., loss certainty equivalents are not robust but their $\delta$-truncation versions are.

 The study of the sensitivity properties may be done by quantifying the sensitivity of risk estimators using  \textit{influence functions}   \citep{contdeguestscandolo,hampel1974influence}. Fix a risk estimator $\re$ for which the estimation is based on applying the risk measure $\rho$ to empirical quantile functions. Then, its sensitivity function at the quantile function $G$ of a distribution $F$, in the direction of the Dirac mass at $z$ is equal to
\begin{equation*}
\label{def:sensitivity}
    S(z;G) \triangleq \lim_{\epsilon\to 0^+}\frac{\rho(\epsilon \delta_z + (1-\epsilon)F)-\rho(F)}{\epsilon},
\end{equation*}
for any $z\in\bR$ such that the limit exists. Note that $S(z;G)$ is nothing but the directional derivative of the risk measure $\rho$ at $F$ in the direction $\delta_z\in\cD$.
$S(z,G)$ measures the sensitivity of the risk estimator based on a large sample to the addition of a new observation \cite{contdeguestscandolo}. In the former work, the authors consider different estimation methods, using both empirical and parametric distributions. In that case, the definition of the sensitivity function should be considered with more attention since the risk measure $\rho$ would have to be replaced with an \textit{effective risk measure} incorporating both the choice of the risk measure and the estimation method as explained in \cite{contdeguestscandolo}.

\subsection{Unbounded sensitivity functions}
In this section, we compute the sensitivity function of the loss
certainty equivalent. We find that this risk measure has unbounded
sensitivity function which is consistent with our findings of
Section \ref{sec:robustness}. Note that, unlike the setting of
Section \ref{sec:robustness}, this result makes no reference to any
topology on the set of loss distributions.
\begin{proposition} \label{prop:sensitivity_CE}
The sensitivity function of the loss certainty equivalent
\eqref{eq:CEriskmeasure} is given by
\begin{align}\label{eq:sensitivity_CE}
S(z;G) = \frac{u(|z\wedge 0|)-u(\rho(G))}{u'(\rho(G))}.
\end{align}
\end{proposition}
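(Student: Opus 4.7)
The plan is to exploit the fact that the loss certainty equivalent is defined implicitly via an expectation of $u$ composed with the negative part, and that expectation is linear in the underlying probability measure. Concretely, writing $X \sim F$, the defining identity is
\begin{equation*}
u\bigl(\rho(F)\bigr) \;=\; \mathbb{E}_{F}\bigl[u(|X \wedge 0|)\bigr] \;=\; \int_{\mathbb{R}} u(|x \wedge 0|)\, dF(x),
\end{equation*}
and the same identity holds with $F$ replaced by the perturbed distribution $F_\epsilon := \epsilon \delta_z + (1-\epsilon)F$.

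First, I would compute the right-hand side for $F_\epsilon$ using linearity of the integral with respect to the mixture:
\begin{equation*}
A(\epsilon) \;:=\; \int_{\mathbb{R}} u(|x \wedge 0|)\, dF_\epsilon(x) \;=\; \epsilon\, u(|z \wedge 0|) + (1-\epsilon)\, \mathbb{E}_F[u(|X \wedge 0|)].
\end{equation*}
In particular $A(0) = u(\rho(F))$ and $A'(0) = u(|z \wedge 0|) - u(\rho(F))$.

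Second, since $u$ is strictly increasing and belongs to $C^4(\mathbb{R}_+)$, its inverse $u^{-1}$ is differentiable with $(u^{-1})'(y) = 1/u'(u^{-1}(y))$. Applying $u^{-1}$ gives $\rho(F_\epsilon) = u^{-1}(A(\epsilon))$, so the chain rule yields
\begin{equation*}
\left. \frac{d}{d\epsilon} \rho(F_\epsilon) \right|_{\epsilon = 0^+} \;=\; \frac{A'(0)}{u'\bigl(u^{-1}(A(0))\bigr)} \;=\; \frac{u(|z \wedge 0|) - u(\rho(F))}{u'(\rho(F))},
\end{equation*}
which is exactly the claimed formula for $S(z;G)$. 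The one-sided limit in the definition of $S(z;G)$ coincides with this derivative because $A(\epsilon)$ is affine in $\epsilon$, so no extra regularity is needed.

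The argument is essentially a one-line implicit differentiation; there is no real obstacle, only the bookkeeping required to check that $u'(\rho(F))$ is nonzero (which follows from strict monotonicity of $u$) so that the denominator is well-defined, and to observe that the affine structure of $A(\epsilon)$ makes the limit trivial to pass.
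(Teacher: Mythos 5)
Your proposal is correct and follows essentially the same route as the paper's own proof: express $\rho(F_\epsilon)=u^{-1}\bigl((1-\epsilon)\int u(|x\wedge 0|)\,dF(x)+\epsilon\,u(|z\wedge 0|)\bigr)$ using linearity of the integral in the mixture, then differentiate via the chain rule for $u^{-1}$. The only difference is that you spell out the ``simple calculus'' step (affine $A(\epsilon)$, nonvanishing of $u'$) which the paper leaves implicit.
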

\begin{proof}
  By denoting $G_\epsilon(\cdot)$ the quantile function corresponding to the distribution function $F_\epsilon(\cdot)=(1-\epsilon)F(\cdot)+\epsilon \delta_z(\cdot)$, we have
    \baa
    \rho(G_\epsilon) &=& u^{-1}\left(\int_\mathbb{R} u(|x\wedge 0|)dF_\epsilon(x)\right)\\
    &=& u^{-1}\left((1-\epsilon)\int_\mathbb{R} u(|x\wedge 0|)dF(x)+\epsilon u(|z\wedge 0|)\right).
    \eaa
  Now, simple calculus leads to \eqref{eq:sensitivity_CE}.
\end{proof}

From Proposition \ref{prop:sensitivity_CE}, if $\lim_{x\rightarrow \infty}u(x)=\infty$, $\lim_{z\downarrow -\infty}S(z;G)=+\infty$, showing that the sensitivity function is unbounded. In particular, for the $L^p$ and entropic risk measures, the sensitivity functions are unbounded.

\subsection{Boundedness of sensitivity functions for robust risk estimators}
In this section, we compute the sensitivity functions of the
$\delta$-truncated versions of the loss certainty equivalents.
These truncated versions were introduced in Section
\ref{sec:robustification} in order to obtain robust risk
estimators. The conclusion of the following proposition is that by
truncating these risk measures, their sensitivity functions become
bounded. Therefore, the robustness properties of risk estimators
derived in Section \ref{sec:robustness} are consistent with the
sensitivity functions computed in this section.

\begin{proposition}\label{prop:sensitivity_robust_CE}
Consider the $\delta$-truncation of the loss certainty equivalent \eqref{eq:CEriskmeasuretruncated} with $0<\delta <1$. Assume $F(z)<1$ and $G(\cdot)$ is differentiable at $\delta$. Then, the sensitivity $S(z;G)$ can be  computed as follows:
\begin{enumerate}
  \item[\bf(i)] When $G(\delta)>0$,
\begin{align}\label{eq:SensitivityCEPositive}
  S(z;G) =  0.
\end{align}
\item[\bf(ii)] When $G(\delta)=0$,
\begin{align}\label{eq:SensitivityCEZero}
  S(z;G) =\frac{1}{u'(\rho_\delta(G))}\begin{cases}
    0, & z\ge G(\delta),\\
    \delta \, (1-\delta) \, G'(\delta), & z< G(\delta).
  \end{cases}
\end{align}
\item[\bf(iii)] When $G(\delta)<0$,
\begin{align}\label{eq:SensitivityCENegative}
  S(z;G) = \frac{1}{u'(\rho_\delta(G))}\left[-u(\rho_\delta(G)) + \begin{cases}
    u(|z\wedge 0|) - \delta^2 u'(|G(\delta)|) G'(\delta) , & z>G(\delta),\\
    u(|z\wedge 0|) , & z=G(\delta),\\
    u(|G(\delta)|) + \delta(1-\delta)u'(|G(\delta)|)G'(\delta), & z<G(\delta).
  \end{cases}\right]
\end{align}
\end{enumerate}
\end{proposition}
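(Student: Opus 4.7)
The plan is to reduce everything to a one-dimensional perturbation analysis. Write $F_\epsilon = (1-\epsilon)F + \epsilon \delta_z$ for the contaminated distribution and $G_\epsilon$ for its left-continuous inverse. Using \eqref{eq:CEriskmeasuretruncated} we have $\rho_\delta(G_\epsilon) = u^{-1}(I(\epsilon))$ where
\[
I(\epsilon) := \delta\, u\!\left(|G_\epsilon(\delta)\wedge 0|\right) + \int_\delta^1 u\!\left(|G_\epsilon(t)\wedge 0|\right)dt,
\]
so by the chain rule $S(z;G) = I'(0)/u'(\rho_\delta(G))$. The whole proposition therefore reduces to computing $I'(0)$ carefully in each case.

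The first step is to give an explicit description of $G_\epsilon$ for $\epsilon$ small. A direct inversion of $F_\epsilon$ shows that $G_\epsilon(t) = G(t/(1-\epsilon))$ on $(0,(1-\epsilon)F(z))$, $G_\epsilon(t) = z$ on $[(1-\epsilon)F(z),(1-\epsilon)F(z)+\epsilon)$, and $G_\epsilon(t) = G((t-\epsilon)/(1-\epsilon))$ on $[(1-\epsilon)F(z)+\epsilon,1)$. The position of $z$ relative to $G(\delta)$ (equivalently, of $F(z)$ relative to $\delta$) then determines how this decomposition intersects $[\delta,1)$: for $z>G(\delta)$ the plateau at $z$ lies to the right of $\delta$, for $z<G(\delta)$ it lies to the left (and for $\epsilon$ small is disjoint from $[\delta,1)$), and for $z=G(\delta)$ the plateau straddles $\delta$ and one has the striking identity $G_\epsilon(\delta)=z$ exactly.

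The second step is to compute $I(\epsilon)$ in closed form via the substitutions $s = t/(1-\epsilon)$ and $s = (t-\epsilon)/(1-\epsilon)$ on the appropriate sub-intervals. In case (iii), when $G(\delta)<0$ and $z>G(\delta)$, one obtains
\[
I(\epsilon) = \delta\, u\!\left(-G(\delta/(1-\epsilon))\right) + (1-\epsilon)\!\int_{\delta/(1-\epsilon)}^1 u(|G(s)\wedge 0|)\,ds + \epsilon\, u(|z\wedge 0|),
\]
and when $z<G(\delta)$ the last term is absent and the lower limit becomes $(\delta-\epsilon)/(1-\epsilon)$; differentiating at $\epsilon=0$ using $(\delta/(1-\epsilon))'|_0 = \delta$ and $((\delta-\epsilon)/(1-\epsilon))'|_0 = \delta-1$, the differentiability hypothesis $G'(\delta)$, and recognising $I(0) = u(\rho_\delta(G))$ yields the two formulas of \eqref{eq:SensitivityCENegative}. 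For the boundary sub-case $z=G(\delta)$ the plateau calculation mentioned above gives $I(\epsilon) = \delta u(|G(\delta)|) + \epsilon(1-\delta)u(|G(\delta)|) + (1-\epsilon)\int_\delta^1 u(|G(s)\wedge 0|)\,ds$, whose derivative produces exactly the middle formula.

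In cases (i) and (ii), monotonicity together with the sign analysis of $G_\epsilon(\delta)$ collapses the computation dramatically. When $G(\delta)>0$, continuity of $G$ at $\delta$ and $F(z)<1$ force $G_\epsilon(t\vee\delta)>0$ for all $t\in(0,1)$ and $\epsilon$ small, so $I(\epsilon)$ is identically $u(0)$ and \eqref{eq:SensitivityCEPositive} follows. When $G(\delta)=0$, the same argument applies whenever $z\ge G(\delta)$, while for $z<G(\delta)$ only the small window $s\in [(\delta-\epsilon)/(1-\epsilon),\delta)$ contributes a nonzero integrand; Taylor-expanding $-G(s) = -G'(\delta)(s-\delta)+o(s-\delta)$ on this shrinking window and expanding $u$ around $0$ produces a first-order term of size $\delta(1-\delta)u'(0)G'(\delta)\epsilon$, matching \eqref{eq:SensitivityCEZero}. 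The main technical obstacle throughout is bookkeeping: one has to check that all the boundary shifts $\delta/(1-\epsilon)$ and $(\delta-\epsilon)/(1-\epsilon)$ stay inside the correct sub-interval of the piecewise description of $G_\epsilon$ for $\epsilon$ small (this is exactly where the hypothesis $F(z)<1$ and the continuity/differentiability of $G$ at $\delta$ are used), so that differentiation under the integral and change of variables are legitimate.
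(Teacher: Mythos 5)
Your proposal is correct and follows essentially the same route as the paper: the explicit piecewise formula for $G_\epsilon$, the decomposition of the truncated functional into the atom term $\delta\,u(|G_\epsilon(\delta)\wedge 0|)$ plus $\int_\delta^1 u(|G_\epsilon(t)\wedge 0|)\,dt$, the case analysis $z \gtrless G(\delta)$ translated into $F(z)$ versus $\delta$, and the chain rule through $u^{-1}$. The differences are only cosmetic: you differentiate exact closed forms obtained by change of variables where the paper writes term-by-term $o(\epsilon)$ expansions, and your boundary convention uses $F(z)$ where the paper uses $F(z-)$ to accommodate an atom at $z$, a discrepancy confined to single points that does not affect the first-order computation.
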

\begin{proof}
  First let us recall some properties of left-continuous inverse functions. For any distribution function $\widetilde F$, denote by $\widetilde F^{-1}$ its left-continuous inverse. Then, for any $t\in[0,1],x\in\mathbb{R}$,
  \begin{align}\label{eq:LeftInversePropertyOne}
  \widetilde F(x)\ge t\Longleftrightarrow x\ge \widetilde F^{-1}(t),\quad  \widetilde F(x)<t\Longleftrightarrow x< \widetilde F^{-1}(t),
  \end{align}
  and thus
  \begin{align}\label{eq:LeftInversePropertyTwo}
    \widetilde F(x-)<t\le \widetilde F(x)\Longleftrightarrow x = \widetilde F^{-1}(t).
  \end{align}

  In the following, we denote by $F(\cdot)$ the distribution function associated with $G(\cdot)$. By assumption, $G(\cdot)$ is differentiable and thus continuous at $\delta$. We claim that
  \begin{align}\label{eq:LeftInversePropertyThree}
    F(z-)\le \delta \le F(z) \Longleftrightarrow z = G(\delta),\quad  F(z)>\delta \Longleftrightarrow z>G(\delta), \quad F(z)<\delta \Longleftrightarrow z<G(\delta).
  \end{align}
  Indeed, from \eqref{eq:LeftInversePropertyOne}, we deduce that $F(z)<\delta$ if only if $z<G(\delta)$. From \eqref{eq:LeftInversePropertyTwo}, we deduce that if $z = G(\delta)$, then $F(z-)\le \delta \le F(z)$. Thus, to prove \eqref{eq:LeftInversePropertyThree}, we only need to show that if $F(z-)\le \delta \le F(z)$, $z = G(\delta)$. Suppose $F(z-)\le \delta \le F(z)$. On the one hand, from \eqref{eq:LeftInversePropertyOne}, $z\ge G(\delta)$. On the other hand, for any $\varepsilon>0$ small enough, $F(z-\varepsilon)<\delta+\varepsilon$, which again by \eqref{eq:LeftInversePropertyOne} leads to $z-\epsilon<G(\delta+\varepsilon)$. Letting $\varepsilon\downarrow 0$, by the continuity of $G(\cdot)$ at $\delta$, we conclude that $z\le G(\delta)$.

  Denote by $G_\epsilon(\cdot)$ the quantile function corresponding to the distribution function $F_\epsilon(\cdot)=(1-\epsilon)F(\cdot)+\epsilon \delta_z(\cdot)$. We only need to compute
  \begin{align*}
  A:=\lim_{\epsilon\downarrow 0}\frac{\int_0^1u(|G_\epsilon(t\vee \delta)\wedge 0|)dt - \int_0^1u(|G(t\vee \delta)\wedge 0|)dt}{\epsilon},
  \end{align*}
  and $S(z;G)$ follows from the chain rule.

  Straightforward computation shows that
  \begin{align*}
  G_\epsilon(t) = \begin{cases}
    G\left(\frac{t-\epsilon}{1-\epsilon}\right), & t>\epsilon + (1-\epsilon)F(z),\\
    G\left(\frac{t}{1-\epsilon}\right), & t\le (1-\epsilon)F(z-),\\
    z, & (1-\epsilon)F(z-)<t\le \epsilon  + (1-\epsilon)F(z).
  \end{cases}
    \end{align*}
  Thus,
  \begin{align*}
  &\int_\delta^1u(|G_\epsilon(t)\wedge 0|)dt\\
   =& \int_\delta^1u(|G\left(\frac{t-\epsilon}{1-\epsilon}\right)\wedge 0|)\mathbf{1}_{\{t>\epsilon + (1-\epsilon)F(z)\}}dt + \int_\delta^1u(|G\left(\frac{t}{1-\epsilon}\right)\wedge 0|)\mathbf{1}_{\{t\le (1-\epsilon)F(z-)\}}dt \\
   &\quad + \int_\delta^1u(|z\wedge 0|)\mathbf{1}_{\{(1-\epsilon)F(z-)<t\le \epsilon  + (1-\epsilon)F(z)\}}dt\\
    =& (1-\epsilon)\int_{\frac{\delta-\epsilon}{1-\epsilon}}^1u(|G\left(t\right)\wedge 0|)\mathbf{1}_{\{t>F(z)\}}dt + (1-\epsilon)\int_{\frac{\delta}{1-\epsilon}}^1u(|G\left(t\right)\wedge 0|)\mathbf{1}_{\{t\le F(z-)\}}dt \\
   &\quad + \int_\delta^1u(|z\wedge 0|)\mathbf{1}_{\{(1-\epsilon)F(z-)<t\le \epsilon  + (1-\epsilon)F(z)\}}dt\\
   =&(1-\epsilon)\int_\delta^1u(|G\left(t\right)\wedge 0|)dt +(1-\epsilon)\int_{\frac{\delta-\epsilon}{1-\epsilon}}^{\delta}u(|G\left(t\right)\wedge 0|)\mathbf{1}_{\{t>F(z)\}}dt\\
   &\quad - (1-\epsilon)\int_{\delta}^{\frac{\delta}{1-\epsilon}}u(|G\left(t\right)\wedge 0|)\mathbf{1}_{\{t\le F(z-)\}}dt - (1-\epsilon)\int_\delta^1u(|G\left(t\right)\wedge 0|)\mathbf{1}_{\{F(z-)<t\le F(z)\}}dt\\
   &\quad  + \int_\delta^1u(|z\wedge 0|)\mathbf{1}_{\{(1-\epsilon)F(z-)<t\le \epsilon  + (1-\epsilon)F(z)\}}dt.
\end{align*}
It is then easy to show
\begin{align*}
  \int_{\frac{\delta-\epsilon}{1-\epsilon}}^{\delta}u(|G\left(t\right)\wedge 0|)\mathbf{1}_{\{t>F(z)\}}dt &= \begin{cases}
    0, & F(z)\ge \delta,\\
    (1-\delta)u(|G(\delta)\wedge 0|)\epsilon + o(\epsilon), & F(z)<\delta,
  \end{cases}\\
  \int_{\delta}^{\frac{\delta}{1-\epsilon}}u(|G\left(t\right)\wedge 0|)\mathbf{1}_{\{t\le F(z-)\}}dt &= \begin{cases}
    \delta u(|G(\delta)\wedge 0|)\epsilon + o(\epsilon), & F(z-)> \delta,\\
    0, & F(z-)\le \delta.
    \end{cases}
\end{align*}
Noticing \eqref{eq:LeftInversePropertyTwo}, we can show that
\begin{align*}
  \int_\delta^1u(|G\left(t\right)\wedge 0|)\mathbf{1}_{\{F(z-)<t\le F(z)\}}dt = \begin{cases}
    0, & F(z)< \delta,\\
    (F(z)-\delta)u(|z\wedge 0|), & F(z-)\le\delta\le F(z),\\
    (F(z)-F(z-))u(|z\wedge 0|), & \delta< F(z-).
  \end{cases}
\end{align*}
Similarly, we can show that
{\footnotesize
\begin{align*}
  \int_\delta^1u(|z\wedge 0|)\mathbf{1}_{\{(1-\epsilon)F(z-)<t\le \epsilon  + (1-\epsilon)F(z)\}}dt = \begin{cases}
    o(\epsilon), & F(z)< \delta,\\
    \left[(1-\epsilon)(F(z)-\delta)+\epsilon (1-\delta)\right]u(|z\wedge 0|), & F(z-)\le \delta\le F(z),\\
    \left[(1-\epsilon)(F(z)-F(z-))+\epsilon\right]u(|z\wedge 0|), & \delta<F(z-).
  \end{cases}
\end{align*}}
Therefore,
{\footnotesize
\begin{align*}
  &\int_\delta^1u(|G_\epsilon(t)\wedge 0|)dt=(1-\epsilon)\int_\delta^1u(|G\left(t\right)\wedge 0|)dt+ \begin{cases}
    \left[u(|z\wedge 0|) - \delta u(|G(\delta)\wedge 0|)\right]\epsilon + o(\epsilon), & \delta<F(z-),\\
    (1-\delta)u(|z\wedge 0|)\epsilon +o(\epsilon), & F(z-)\le\delta \le F(z),\\
    (1-\delta)u(|G(\delta)\wedge 0|)\epsilon + o(\epsilon), & \delta>F(z).
  \end{cases}
\end{align*}}

On the other hand
\begin{align*}
  G_\epsilon(\delta)\wedge 0 - G(\delta)\wedge 0 = \begin{cases}
    G\left(\frac{\delta-\epsilon}{1-\epsilon}\right)\wedge 0 - G(\delta)\wedge 0, & \delta >\epsilon + (1-\epsilon)F(z),\\
    G\left(\frac{\delta}{1-\epsilon}\right)\wedge 0- G(\delta)\wedge 0, & \delta\le (1-\epsilon)F(z-),\\
    z\wedge 0- G(\delta)\wedge 0, & (1-\epsilon)F(z-)<\delta \le \epsilon  + (1-\epsilon)F(z).
  \end{cases}
\end{align*}
We discuss case by case.
\begin{enumerate}
  \item $G(\delta)>0$. Because $G(\cdot)$ is continuous at $\delta$, it is easy to show that $G_\epsilon(\delta)\wedge 0 - G(\delta)\wedge 0=0$ when $\epsilon$ is sufficiently small.
  \item $G(\delta)=0$. In this case, we have
  \begin{align*}
  G_\epsilon(\delta)\wedge 0 - G(\delta)\wedge 0 = \begin{cases}
    G\left(\frac{\delta-\epsilon}{1-\epsilon}\right), & \delta >\epsilon + (1-\epsilon)F(z),\\
    0, & \delta\le (1-\epsilon)F(z-),\\
    z\wedge 0, & (1-\epsilon)F(z-)<\delta \le \epsilon  + (1-\epsilon)F(z)
  \end{cases}
\end{align*}
when $\epsilon$ is sufficiently small. Recalling \eqref{eq:LeftInversePropertyThree}, we conclude that
\begin{align*}
  G_\epsilon(\delta)\wedge 0 - G(\delta)\wedge 0 = \begin{cases}
    0, & \delta <F(z-),\\
    0, & F(z-)\le \delta\le F(z),\\
    -(1-\delta)G'(\delta)\epsilon + o(\epsilon), & \delta>F(z).
  \end{cases}
\end{align*}
\item $G(\delta)<0$. In this case, we have
\begin{align*}
  G_\epsilon(\delta)\wedge 0 - G(\delta)\wedge 0 = \begin{cases}
    G\left(\frac{\delta-\epsilon}{1-\epsilon}\right) - G(\delta), & \delta >\epsilon + (1-\epsilon)F(z),\\
    G\left(\frac{\delta}{1-\epsilon}\right)- G(\delta), & \delta\le (1-\epsilon)F(z-),\\
    z- G(\delta), & (1-\epsilon)F(z-)<\delta \le \epsilon  + (1-\epsilon)F(z).
  \end{cases}
\end{align*}
  when $\epsilon$ is small enough, leading to
  \begin{align*}
  G_\epsilon(\delta)\wedge 0 - G(\delta)\wedge 0 = \begin{cases}
    \delta G'(\delta)\epsilon + o(\epsilon), & \delta <F(z-),\\
    0, & F(z-)\le \delta\le F(z),\\
    -(1-\delta)G'(\delta)\epsilon + o(\epsilon), & \delta>F(z).
  \end{cases}
\end{align*}
\end{enumerate}

Notice that
\begin{align*}
  \int_0^1u(|G_\epsilon(t\vee \delta)\wedge 0|)dt - \int_0^1u(|G(t\vee \delta)\wedge 0|)dt = \int_\delta^1u(|G_\epsilon(t)\wedge 0|)dt - \int_\delta^1u(|G(t)\wedge 0|)dt\\
  + \delta \left(u(|G_\epsilon(\delta)\wedge 0|) - u(|G(\delta)\wedge 0|)\right).
\end{align*}
Then, when $G(\delta)>0$,
\begin{align*}
  &\lim_{\epsilon\downarrow 0}\frac{\int_0^1u(|G_\epsilon(t\vee \delta)\wedge 0|)dt - \int_0^1u(|G(t\vee \delta)\wedge 0|)dt}{\epsilon}\\
  =& \lim_{\epsilon\downarrow 0}\frac{\int_\delta^1u(|G_\epsilon(t)\wedge 0|)dt - \int_\delta^1u(|G(t)\wedge 0|)dt + \delta\left(u(|G_\epsilon(\delta)\wedge 0|) - u(|G(\delta)\wedge 0|)\right)}{\epsilon}\\
   =& \lim_{\epsilon\downarrow 0}\frac{\int_\delta^1u(|G_\epsilon(t)\wedge 0|)dt - \int_\delta^1u(|G(t)\wedge 0|)dt - \delta u'(|G(\delta)\wedge 0|)\left(G_\epsilon(\delta)\wedge 0 - G(\delta)\wedge 0\right)}{\epsilon}\\
  =& 0.
\end{align*}
Similarly, when $G(\delta) = 0$,
{\footnotesize
\begin{align*}
  \lim_{\epsilon\downarrow 0}\frac{\int_0^1u(|G_\epsilon(t\vee \delta)\wedge 0|)dt - \int_0^1u(|G(t\vee \delta)\wedge 0|)dt}{\epsilon}
  = \begin{cases}
    0, & \delta <F(z-),\\
    0, & F(z-)\le \delta\le F(z),\\
    u'(0)\delta (1-\delta) G'(\delta), & \delta>F(z).
  \end{cases}
\end{align*}}
When $G(\delta)<0$,
\begin{align*}
  &\lim_{\epsilon\downarrow 0}\frac{\int_0^1u(|G_\epsilon(t\vee \delta)\wedge 0|)dt - \int_0^1u(|G(t\vee \delta)\wedge 0|)dt}{\epsilon}\\
  =& -\int_\delta^1u(|G(t)\wedge 0|)dt + \begin{cases}
    u(|z\wedge 0|) - \delta u(|G(\delta)|) - \delta^2 u'(|G(\delta)|) G'(\delta) , & \delta <F(z-),\\
    (1-\delta)u(|z\wedge 0|) , & F(z-)\le \delta\le F(z),\\
    (1-\delta)u(|G(\delta)|) + \delta(1-\delta)u'(|G(\delta)|)G'(\delta), & \delta>F(z).
  \end{cases}\\
  =& -\int_0^1u(|G(t\vee \delta)\wedge 0|)dt + \begin{cases}
    u(|z\wedge 0|) - \delta^2 u'(|G(\delta)|) G'(\delta) , & \delta <F(z-),\\
    u(|z\wedge 0|) , & F(z-)\le \delta\le F(z),\\
    u(|G(\delta)|) + \delta(1-\delta)u'(|G(\delta)|)G'(\delta), & \delta>F(z).
  \end{cases}
\end{align*}

Finally, applying chain rule and \eqref{eq:LeftInversePropertyThree}, we immediately have \eqref{eq:SensitivityCEPositive}-\eqref{eq:SensitivityCENegative}.
\end{proof}

From Proposition \ref{prop:sensitivity_robust_CE}, it is easy to show that
\begin{align*}
  \sup_{z\in \mathbb{R}}S(z;G) \le \frac{1}{u'(\rho_\delta(G))}\times \begin{cases}
    0, & G(\delta)>0,\\
    \delta(1-\delta)G'(\delta), &G(\delta)=0,\\
    -u(\rho_\delta(G)) + u(|G(\delta)|) + \delta(1-\delta)u'(|G(\delta)|)G'(\delta), & G(\delta)<0.
  \end{cases}
\end{align*}
Thus, the truncated loss certainty equivalent  has bounded
sensitivity functions, which is consistent with the robustness
properties of risk estimators derived in Section
\ref{sec:robustness}.

\begin{remark}
  If $F$ has a continuous positive density $f$ in the neighborhood of $G(\delta)$, then  $G(\cdot)$ is differentiable at $\delta$ and
  \begin{align*}
    G'(\delta) = \frac{1}{f(G(\delta))}.
  \end{align*}
  Moreover, denoting by $\SVaR_\delta(z;G)$ the sensitivity function of $\VaR_\delta$ at $G$, which, from the proof of Proposition \ref{prop:sensitivity_robust_CE}, is
\baa
\SVaR_\delta(z;G) = \begin{cases}
    \frac{\delta}{f(G(\delta))} , & z> G(\delta),\\
    0, & z = G(\delta),\\
    -\frac{1-\delta}{f(G(\delta))}, & z< G(\delta),
  \end{cases}
\eaa we can rewrite, for $G(\delta)<0$, the sensitivity of the
$\delta$-truncation of the loss certainty equivalent  as: \baa
S(z;G) = \frac{1}{u'(\rho_\delta(G))}\left[-u(\rho_\delta(G)) +
u(|(z\vee G(\delta))\wedge 0|) - \delta
u'(|G(\delta)|)\SVaR_\delta(z;G)\right]. \eaa
\end{remark}

\section{Conclusions}\label{se:Conclusions}
In this paper, we have proposed a new class of risk measures named as loss-based risk measures, provided two representation theorems for convex loss-based risk measures, and investigated the robustness of the risk estimators associated with a family of statistical loss-based risk measures that include both statistical convex loss-based risk measures and VaR on losses as special cases.

Motivated by the fact that the risk measures employed in some of the risk management practices only depend on portfolio losses, we propose the loss-dependence property, a characterizing property of loss-based risk measures, which is largely overlooked and actually cannot be accommodated in the existing risk measure frameworks. We have shown in the paper a dual representation theorem for convex loss-based risk measures and another representation theorem if the risk measures are furthermore distributional-based. In addition, we have provided abundant interesting examples of loss-based risk measures, some of which cannot be obtained in the existing risk measure frameworks by simple modification.

In order to address the issue that risk estimates that are extremely sensitive to single data points are useless in some of the risk management practices, we have investigated the robustness of the risk estimators associated with a family of statistical loss-based risk measures. We have found a sufficient and necessary condition for those risk estimators to be robust, and this result significantly improves the existing ones. From that condition, we have shown that statistical convex loss-based risk measures lead to non-robust risk estimators while the loss-based VaR leads to a robust risk estimator, and these results have been confirmed by performing further sensitivity analysis. The conflict between the convexity property of a risk measure and the robustness of the corresponding risk estimator suggests that we need to decide which of the properties is more relevant when choosing a risk measure for certain use.

\appendix
\section{Lebesgue Continuity}\label{se:lebesgue}
In this section, we obtain a sufficient and necessary condition for the risk measure \eqref{eq:rhofunction} to satisfy the Lebesgue continuity defined in \citet{jouini2006law} and compare this condition with the one derived in Theorem \ref{th:weak_continuity}.
\begin{definition}[Lebesgue continuity]
Let $\rho$ be defined in \eqref{eq:rhofunction}. $\rho$ is {\em Lebesgue continuous} at $G\in \quan$ if for any $G_n\in \quan$ such that $G_n$ is uniformly bounded and $G_n\rightarrow G$, $\rho(G_n)\rightarrow \rho(G)$.
\end{definition}
 The following result gives the dual characterization of the
Lebesgue continuity and shows how the conclusion in Theorem \ref{th:weak_continuity}
is modified when the weak continuity is replaced by the Lebesgue continuity.
\begin{theorem}
  Let $\rho$ be defined as in \eqref{eq:rhofunction}. Then, the following are equivalent
  \begin{itemize}
  \item[(i)] $\rho(\cdot)$ is Lebesgue continuous at any $G\in\quan_c^\infty$.
  \item[(ii)] For each $c>0$,
  \begin{align*}
    \lim_{\delta\downarrow 0}\sup_{v(m)\le c}m((0,\delta))=0.
  \end{align*}
  \end{itemize}
\end{theorem}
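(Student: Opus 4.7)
The plan is to establish each implication separately: $(i) \Rightarrow (ii)$ by a direct contradiction mirroring the argument of Theorem \ref{th:weak_continuity}, and $(ii) \Rightarrow (i)$ by a sandwich bound on a near-optimizing sequence $m_n$.

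For $(i) \Rightarrow (ii)$, suppose (ii) fails, so there exist $c>0$, $\eta>0$, a sequence $\delta_n \downarrow 0$, and $m_n \in \dom(v)$ with $v(m_n) \le c$ and $m_n((0,\delta_n)) \ge \eta$. I would fix $\beta > c/\eta$ and take $G_n \in \quan_c^\infty$ equal to $-\beta$ on $(0,\delta_n]$, linearly interpolating from $-\beta$ to $0$ on $[\delta_n, 2\delta_n]$, and equal to $0$ on $[2\delta_n,1)$. Then $G_n \to G_0 \equiv 0$ in $\quan$ (for each fixed $z>0$, eventually $G_n(z)=0$), both quantiles belong to $\quan_c^\infty$, and the $G_n$ are uniformly bounded by $\beta$. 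Plugging $m_n$ into the dual yields $-\rho(G_n) \le \int (G_n \wedge 0)\,dm_n + v(m_n) \le -\beta\, m_n((0,\delta_n)) + c \le -\beta\eta + c < 0 = -\rho(G_0)$, so $\rho(G_n) \not\to \rho(G_0)$, contradicting Lebesgue continuity at $G_0$.

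For $(ii) \Rightarrow (i)$, set $U := -\rho$, let $C$ be a uniform bound on $\|G_n\|_\infty$, and abbreviate $F_n(m) := \int(G_n\wedge 0)\,dm + v(m)$, $F(m) := \int(G\wedge 0)\,dm + v(m)$. Upper semi-continuity is routine: for each fixed $m \in \dom(v)$, dominated convergence gives $F_n(m) \to F(m)$, whence $\limsup U(G_n) \le F(m)$ and taking inf yields $\limsup U(G_n) \le U(G)$. For the lower bound, choose a near-minimizer $m_n$ with $F_n(m_n) \le U(G_n) + 1/n$; since $U(G_n) \le 0$ and $\int(G_n\wedge 0)\,dm_n \ge -C$, this forces $v(m_n) \le c := C+1$ for $n$ large. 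Condition (ii) then supplies, for any $\varepsilon>0$, a $\delta \in (0,1)$ with $m_n((0,\delta)) \le \varepsilon$ uniformly in $n$. I would bound $F(m_n) - F_n(m_n) = \int(G\wedge 0 - G_n\wedge 0)\,dm_n$ by splitting into $(0,\delta)$, $[\delta, 1-\delta']$, and $(1-\delta',1)$ for some small $\delta'>0$: the first piece is at most $2C\varepsilon$; the middle is bounded by $\|G_n-G\|_{[\delta,1-\delta']}$, which tends to zero by Lemma \ref{le:uniformconvergence} since $G$ is continuous; and for the tail piece I would exploit monotonicity of the quantiles, namely $G_n(z)\wedge 0 \ge G_n(1-\delta')\wedge 0 \to G(1-\delta')\wedge 0$ and $G(z)\wedge 0 \le G(1-)\wedge 0$, giving an integrand upper bound $[G(1-)\wedge 0 - G(1-\delta')\wedge 0] + o_n(1)$ that is made arbitrarily small by continuity of $G$ at $1$. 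Summing, $U(G) \le F(m_n) \le F_n(m_n) + o(1) \le U(G_n) + o(1)$, so $U(G) \le \liminf U(G_n)$.

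The main obstacle is controlling the neighborhood of $1$: condition (ii) says nothing about $m_n((1-\delta', 1))$, which could be as large as $1$, so that piece cannot be controlled by shrinking the measure as we do near $0$. The resolution, and the key insight, is to bound the \emph{integrand} rather than the measure on $(1-\delta',1)$: monotonicity squeezes both $G\wedge 0$ and $G_n\wedge 0$ into the short window $[G(1-\delta')\wedge 0 + o_n(1),\, G(1-)\wedge 0]$, which collapses to a single point as $\delta' \downarrow 0$ by continuity of $G$, so the integrand difference vanishes uniformly on $(1-\delta',1)$ no matter how $m_n$ is distributed there.
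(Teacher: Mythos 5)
Your proof is correct and follows, in its overall structure, the same route as the paper: for $(i)\Rightarrow(ii)$ you use exactly the paper's contradiction with tent-shaped quantiles equal to $-\beta$ on $(0,\delta_n]$ converging to $G\equiv 0$, and for $(ii)\Rightarrow(i)$ you prove upper semicontinuity of $U=-\rho$ by bounded convergence for each fixed $m$ and lower semicontinuity via near-minimizers $m_n$ whose penalties are bounded (your $v(m_n)\le C+1$ matches the paper's $v(m_n)\le 2M+1$), so that condition (ii) controls $m_n((0,\delta))$ uniformly and Lemma \ref{le:uniformconvergence} handles the compact middle interval. The one genuine difference is the treatment near $z=1$: the paper replaces the quantiles by their truncations $G_n(\cdot\wedge\eta)$ and $G(\cdot\wedge\eta)$ and invokes $U(G(\cdot\wedge\eta))\ge U(G)-\epsilon$, a fact it gets from the representation as in the proof of Lemma \ref{le:consistency}, whereas you keep the original quantiles and squeeze the integrand on $(1-\delta',1)$ between $G_n(1-\delta')\wedge 0$ and $G(1-)\wedge 0$ using monotonicity of quantile functions and the existence of the left limit $G(1-)$ for bounded $G$. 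Both devices accomplish the same thing; yours is slightly more self-contained since it avoids the auxiliary convergence $U(G(\cdot\wedge\eta))\to U(G)$, while the paper's truncation lets it reuse verbatim the computation already carried out in Lemma \ref{le:consistency} and Theorem \ref{th:weak_continuity}. Your small bookkeeping steps are also sound: $|G|\le C$ because $G$ is the pointwise limit of the uniformly bounded $G_n$, and $U(G_n)\le 0$ because $\rho\ge 0$, which is what makes the penalty bound on the near-minimizers work.
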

\begin{proof}
We work with $U(\cdot):=-\rho(\cdot)$.
  \begin{itemize}
   \item[$(i)\Rightarrow(ii)$] If (ii) does not hold, there exists $c>0$, $\epsilon_0>0$, $\delta_n\downarrow 0$, and $m_n \in\cP((0,1))$ with $v(m_n)\le c$, such that $m_n((0,\delta_n))\ge \epsilon_0$. Define
       \begin{align*}
    G_n(z):=\begin{cases}
      -\beta, & 0<z\le\delta_n,\\
      \frac{\beta}{\delta_n}(z-2\delta_n),& \delta_n< z\le 2\delta_n,\\
      0, & 2\delta_n<z<1,
    \end{cases}
    \quad n\ge 1,
  \end{align*}
  where $\beta:=\frac{2c}{\epsilon_0}$. Then, we have
  \begin{align*}
    U(G_n(\cdot))&\le \int_{(0,1)}(G_n(z)\wedge 0)m_n(dz) + v(m_n)\\
        &\le -2c + c=-c<0.
  \end{align*}
  On the other hand, $G_n(\cdot)$ is uniformly bounded, and because $\delta_n\downarrow 0$, $G_n(\cdot)\rightarrow G(\cdot)\equiv0$. Because $U(0)=0$, $U(\cdot)$ is not Lebesgue continuous at $0$, which is a contradiction.

  \item[$(ii)\Rightarrow(i)$] Suppose $G,G_n\in\quan_c^\infty$ are uniformly bounded and $G_n\rightarrow G$. On the one hand,
  \begin{align*}
    \limsup_{n\rightarrow \infty}U(G_n(\cdot))&\le \inf_{m\in\cP((0,1))}\left\{\limsup_{n\rightarrow \infty}\int_{(0,1)}(G_n(z)\wedge 0)m(dz)+v(m)\right\}\\
    & = \inf_{m\in\cP((0,1))}\left\{\int_{(0,1)}(G(z)\wedge 0)m(dz)+v(m)\right\}\\
    & = U(G(\cdot)),
  \end{align*}
  where the first equality is due to bounded convergence theorem.

  On the other hand, let $M>0$ be a uniform bound of $G_n$, i.e., $\sup_{n\ge 1,z\in(0,1)}|G_n(z)|\le M$. For each fixed $0<\epsilon<1$, and $n\ge 1$, we can find $m_n\in\dom(v)$ such that
  \begin{align}\label{eq:UGn}
    U(G_n(\cdot))\ge \int_{(0,1)}(G_n(z)\wedge 0)m_n(dz) + v(m_n) -\epsilon.
  \end{align}
  Let $c_n:= v(m_n)<\infty$, then
  \begin{align*}
    c_n&\le U(G_n(\cdot))-\int_{(0,1)}(G_n(z)\wedge 0)m_n(dz)+\epsilon\\
    &\le 2M + 1=:c.
  \end{align*}
   Thus, we have
  \begin{align*}
    U(G_n(\cdot))\ge \inf_{v(m)\le c}\left\{\int_{(0,1)}(G_n(z)\wedge 0)m(dz) + v(m)\right\} -\epsilon,\quad n\ge 1.
  \end{align*}
  By monotonicity of $U$, we can find $0<\eta<1$ such that $U(G(\cdot\wedge \eta))\ge U(G(\cdot))-\epsilon$. Because $\lim_{\delta\downarrow 0}\sup_{v(m)\le c}m((0,\delta))=0$, we can find $0<\delta<\eta$ such that $\sup_{v(m)\le c}m((0,\delta))<\epsilon/M$. By Lemma \ref{le:uniformconvergence}, $G_n(z)\rightarrow G(z)$ uniformly on $[\delta,\eta]$. Thus, there exists $N$ such that for any $n\ge N$, $\sup_{z\in[\delta,1)}|G_n(z\wedge \eta)-G(z\wedge \eta)|<\epsilon$. Now, for each $n\ge N$, we have
  \begin{align*}
    U(G_n(\cdot))&\ge \inf_{v(m)\le c}\left\{\int_{(0,1)}(G_n(z\wedge \eta)\wedge 0)m(dz) + v(m)\right\} -\epsilon\\
    & =\inf_{v(m)\le c}\Big\{\int_{(0,1)}(G(z\wedge \eta)\wedge 0)m(dz) + \int_{(0,\delta)}[(G_n(z\wedge \eta)\wedge 0)-(G(z\wedge \eta)\wedge 0)]m(dz) \\
    &\quad + \int_{[\delta,1)}[(G_n(z\wedge \eta)\wedge 0)-(G(z\wedge \eta)\wedge 0)]m(dz)+ v(m)\Big\} -\epsilon\\
    &\ge \inf_{v(m)\le c}\Big\{\int_{(0,1)}(G(z\wedge \eta)\wedge 0 )m(dz) - 2M\cdot\frac{\epsilon}{M} -\epsilon + v(m)\Big\} -\epsilon\\
    & = \inf_{v(m)\le c}\Big\{\int_{(0,1)}(G(z\wedge \eta)\wedge 0)m(dz) + v(m)\Big\} -4\epsilon\\
    &\ge U(G(\cdot\wedge \eta))-4\epsilon\\
    &\ge U(G(\cdot))-5\epsilon.
  \end{align*}
Thus, we have $\liminf_{n\rightarrow \infty}U(G_n(\cdot))\ge U(G(\cdot))$. In summary, $U$ is Lebesgue continuous at $G$ and so is $\rho$.
\end{itemize}
\end{proof}



\end{document}